\newtheorem{prop}{Proposition}
\definecolor{turquoise}{rgb}{.0,.3,1.0}
\newcommand*{\Ibar}[1]{\overline{#1}}
\newcommand{\TT}{\operatorname{T}}
\newcommand{\EV}{\operatorname{E}}
\newcommand{\VAR}{\operatorname{Var}}
\newcommand{\COV}{\operatorname{Cov}}
\newcommand{\DIAG}{\operatorname{diag}}
\begin{document}

\title{Parametric and State Estimation of Stationary MEMS-IMUs: A Tutorial}

\author{Daniel~Engelsman, Yair~Stolero, 
        and~Itzik~Klein,~\IEEEmembership{Senior Member,~IEEE}

\thanks{The authors are with the Hatter Department of Marine Technologies, Charney School of Marine Sciences, University of Haifa, Israel.\\ E-mails: \{dengelsm@campus, ystolero@campus, kitzik@univ\}.haifa.ac.il}}


\maketitle

\begin{abstract}
Inertial navigation systems (INS) are widely used in almost any operational environment, including aviation, marine, and land vehicles. Inertial measurements from accelerometers and gyroscopes allow the INS to estimate position, velocity, and orientation of its host vehicle. However, as inherent sensor measurement errors propagate into the state estimates, accuracy degrades over time. To mitigate the resulting drift in state estimates, different approaches of parametric and state estimation are proposed to compensate for undesirable errors, using frequency-domain filtering or external information fusion. Another approach uses multiple inertial sensors, a field with rapid growth potential and applications. The increased sampling of the observed phenomenon results in the improvement of several key factors such as signal accuracy, frequency resolution, noise rejection, and higher redundancy. This study offers an analysis tutorial of basic multiple inertial operation, with a new perspective on the error relationship to time, and number of sensors. To that end, a stationary and levelled sensors array is taken, and its robustness against the instrumental errors is analyzed. Subsequently, the hypothesized analytical model is compared with the experimental results, and the level of agreement between them is thoroughly discussed. Ultimately, our results showcase the vast potential of employing multiple sensors, as we observe improvements spanning from the signal level to the navigation states. This tutorial is suitable for both newcomers and people experienced with multiple inertial sensors. 
\end{abstract}

\begin{IEEEkeywords}
Inertial navigation, multiple IMUs, calibration, noise suppression, coarse alignment, dead reckoning.
\end{IEEEkeywords}

\section{Introduction}
\IEEEPARstart{N}{owadays}, inertial navigation systems (INS) have become the most commonly used navigation technique, based on an inertial measurement unit (IMU). With the advancement of micro-electro-mechanical systems (MEMS) inertial sensors, their rapid adoption is being seen in many fields due to their small size, low cost, high cost-effectiveness, and low power consumption. By integrating six degree of freedom (6-DoF) measurements from the gyroscopes and accelerometers, IMUs are able to track the navigation states of their mounting platform. 
However, in the absence of continuous external information such as signal delay or reception loss, the navigation solution becomes heavily reliant on the sensors' performance and their characteristic errors.

It is possible to mitigate these errors in two ways: hardware-wise, using high-grade instrumentation with improved technical capabilities; and software-wise, by applying post-processing filtering at the algorithm level. While the former is often costly or simply not feasible, the latter offers a variety of approaches to handle the errors' consequences.
Typically, there are two types of errors associated with these sensors: stochastic noise and deterministic bias. While stochastic errors tend to vary randomly, their variance remains constant when measured again. Conversely, deterministic errors are repeatable over time, but may vary between measurements \cite{woodman2007introduction}. Traditionally, bandpass filters (BPF) address these errors by selectively attenuating specific frequency bands, thus allowing only desirable components to pass through. However, usually in inertial measurements, noise appears to be white, as its power spectral density (PSD) is equally distributed across the bandwidth and tends to overlap the meaningful signal \cite{Groves2013}. 
\\
Another category of its own is the well-known Kalman filter (KF), often viewed as the main workhorse of real-time tracking problems. Using a dedicated fusion mechanism, noisy sensor measurements are combined with a problem-specific dynamic model, enabling predictions of the system states \cite{Titterton2004}. Guaranteeing its optimality requires a good knowledge of the initial states, small non-linearities, and Gaussian noise behavior. 
Inaccurate initialization results in poorer performances and longer convergence times to the true states due to imprecise propagation of the uncertainties, forward in time. 
\\
To that end, parametric estimation methods propose fitting models to observations and then determining their parameters. Given successful modeling, optimality of each estimate is solely dependent on the sample size, thus motivating the need to collect more data. 
Over the past decade, a growing body of literature has explored the advantages of employing multiple MEMS-IMUs for various INS tasks. Notable works include data fusion \cite{bancroft2011data, skog2014open, zhang2020lightweight, patel2021sensor, libero2022unified}, pedestrian navigation  \cite{bancroft2010multiple, skog2014pedestrian, bose2017noise, chen2020deep}, activity recognition \cite{yuan2014localization, rashid2019window, zhang2022online}, integration with GNSS \cite{guerrier2009improving, martin2013new}, sensor arrays \cite{nilsson2016inertial, skog2016inertial, yu2017precise, yu2021novel}, and lately also self-calibration \cite{larey2020multiple, carlsson2021self, engelsman2022learning}.
\\
In this work we develop a simplified system model, with a new perspective on the error relationship to time, and number of sensors. Our main contribution lies in three key aspects: 
\begin{enumerate}[label=(\roman*)]
    \item User-friendly overview: we break down complex estimation topics into step-by-step analysis, accessible to both entry-level learners and advanced practitioners alike. 
    \item Comprehensive validation - we evaluate the effectiveness of our model both analytically and experimentally.
    \item Open-source access - to enable seamless integration, all contents of this work are publicly available on \href{https://github.com/ANSFL/Multiple-MEMS-IMUs-Estimation}{GitHub}.
\end{enumerate}
The rest of the paper is structured as follows: Section \ref{sec:theory} introduces the relevant theory, Section \ref{sec:formulation} formulates the problem, and Section \ref{sec:MTD} presents the methodology used. Section \ref{sec:AnR} presents the results, Section \ref{sec:disc} provides a discussion, and Section \ref{sec:conc} concludes the study.
\section{Inertial Sensor Models} \label{sec:theory}
Inertial sensors are electronic sampling devices that convert continuous physical time signals into a discrete equally spaced measurement denoted by $\tilde{_{( \ )}}$. The gyroscopes, subscript $g$, provide a 3D vector of angular velocities denoted by $\tilde{\boldsymbol{\omega}}_{ib}^b$. The accelerometers, subscript $a$, indicate a 3D vector of linear accelerations, denoted by $\tilde{\textbf{f}}_{ib}^b$. The inherent error sources that contaminate the ground truth (GT) input signal are described by the following linear relationship \cite{groves2015principles}:
\begin{align} \label{eq:sys-error}
\tilde{\boldsymbol{\omega}}_{ib}^b &= ( \textbf{I}_3 + \textbf{M}_g ) {\boldsymbol{\omega}}_{ib}^b + {\textbf{\textit{b}}}_g + \textit{\textbf{w}}_g  \ , \\
\tilde{\textbf{f}}_{ib}^b &= ( \textbf{I}_3 + \textbf{M}_a ){\textbf{f}}_{ib}^b + {\textbf{\textit{b}}}_a + \textit{\textbf{w}}_a \ .
\end{align}
Its linearity consists of a $3 \times 3$ identity matrix, $\textbf{I}_3$, and a gain matrix, $\textbf{M}$, which determine the error slope, a bias term, {\textbf{\textit{b}}}, which indicates the error offset, and a zero-mean white Gaussian noise, $\textit{\textbf{w}}$, describing the sensor noise. Commonly, noise is assumed to be uncorrelated in time and frequency as 
\begin{align}
\textit{w}(t) \sim \mathcal{N}(0, \sigma_w^2 ) \ , \ \textit{w}(t) \perp \textit{w}(t-1) \ \forall \ t \ .
\end{align}
For the majority of typical operational tasks, both gain matrix and biases are not assumed to exhibit time variation, due to
\begin{align} \label{eq:sensorStationary}
\EV[ \, \dot{\textbf{M}}_j \, ] = \mathbf{0}_{3} \ , \  \EV[ \, \dot{\textbf{\textit{b}}}_j \, ] = \mathbf{0}_{3} \ : \ j \in \{ a , \, g \} \, .
\end{align}
Under WSS conditions, most MEMS gyros are insensitive to Earth's rotation rate; thus, their expected output is reduced to 
\begin{align}
\tilde{\boldsymbol{\omega}}_{ib}^b = ( \textbf{I}_3 + \textbf{M}_g ) \, \cancelto{0}{\boldsymbol{\omega}_{ib}^b } \, + \, {\textbf{\textit{b}}}_g + \textit{\textbf{w}}_g  = {\textbf{\textit{b}}}_g + \textit{\textbf{w}}_g \ .
\end{align}
In a similar manner, the stationary accelerometers are expected to output the gravity projection, $\textbf{g}^n$, alongside the error sources
\begin{align} \label{eq:acc_stationary}
\tilde{\textbf{f}}_{ib}^b  & = ( \textbf{I}_3 + \textbf{M}_a ) \textbf{f}_{ib}^b + {\textbf{\textit{b}}}_a + \textit{\textbf{w}}_a  \notag \\
& = -( \textbf{I}_3 + \textbf{M}_a ) \textbf{T}_n^b \textbf{g}^n + {\textbf{\textit{b}}}_a + \textit{\textbf{w}}_a   \ ,
\end{align}
where $\mathbf{T}_{n}^{b}$ denotes a transformation matrix from the navigation frame to the body frame. In the absence of dynamics, it is evident that the error parameters can be readily estimated. For convenience, let us consider a simplified case study of a leveled platform (and sensor), whose x-y body plane is parallel to the north-east plane in the navigation frame:
\begin{align}
\tilde{\textbf{f}}_{ib}^b  = -\begin{bmatrix} m_{xz} \\ m_{yz} \\ 1 \end{bmatrix} \text{g} + {\textbf{\textit{b}}}_g + \textit{\textbf{w}}_a  \approx -\text{g} \cdot \textbf{e}_z + {\textbf{\textit{b}}}_a + \textit{\textbf{w}}_a  .
\end{align}
This way, only the z-axis accelerometer reads non-zero outputs due to gravity, as cross-coupling effects are considered negligible \cite{bekkeng2008calibration}. 
All GT inputs in any of the sensors axes are assumed to be zero, except for the z-axis that is subjected to gravity, given by Kronecker delta, such that residuals are 
\begin{align} \label{eq:res_omega}
\delta \omega_i  = \tilde{\omega}_i - { \omega_i} &= \textit{b}_{g,i} + \textit{w}_{g,i}  \ \, ; \ i \ \in \ \text{x,y,z} \\
\delta \text{f}_i  = \tilde{\text{f}}_i  - { \text{f}_i } &= \textit{b}_{a,i} + \textit{w}_{a,i}  + \text{g} \, \delta_{iz} \ . \label{eq:res_acc} 
\end{align}
It can be seen that both outputs are governed by a constant bias and random noise, provided that wide-sense stationary (WSS) conditions are met (see Appendix \ref{appendix:a}). Estimating the bias using frequency-based techniques guarantees optimality only in theory, given an infinite sampling rate. 
\\
In contrast, the sample mean of the time series provides maximum likelihood estimates (MLE), and thus is our main focus. 
%
Table~\ref{t:NMC} summarizes the relevant notations used in this article. To emphasize the averaging axis, a distinction is made between a single sensor, subscript $s$, and multiple sensors, subscript $m$. Then, we use the \cite{cox1979theoretical} definition to distinguish between the statistic definition from the estimates; while the former refers to the sample characteristics, the latter (see Section~\ref{sec:formulation}), provides predictions of the unknown population parameters.
\begin{table}[t]
\centering
\caption{Nomenclature of notations.}
\renewcommand{\arraystretch}{2.}
\begin{tabular}{c|c|c|c||c|c|}
\multicolumn{1}{c}{} & \multicolumn{3}{c}{Sample (observable)} & \multicolumn{2}{c}{Population (latent)} \\ 
Sensors & Observations & Mean & Statistics & Items & Parameters \\ \hline
$K=1$ & $\tilde{x}_s$ & $\Ibar{x}_s$ & $\hat{\theta}_s$ & \multirow{2}{*}{$x$} & \multirow{2}{*}{$\theta$} \\ \cline{1-4} 
$K>1$ & $\tilde{x}_m$ & $\Ibar{x}_m$ & $\hat{\theta}_m$ & & \\ \hline
\end{tabular} \label{t:NMC}
\end{table}
\subsection{Single sensor model}
Consider a single sensor sampling $\tilde{\textit{z}}_s(t)$, corresponding to both residuals \eqref{eq:res_omega} and \eqref{eq:res_acc}, whose stationary GT values are known to us. To maintain consistency and prevent confusion with the bias terms, the gravity term is omitted in this section only, as it is deterministically known when the sensors are horizontally leveled. Under WSS conditions, let $\textit{z}_s$ denote the GT population and $\tilde{\textit{z}}_s$ represent the sampled portion
\begin{align}
\tilde{\textit{z}}_s(t) = \hat{\textit{b}}_s + \textit{w}_s (t) \ \in \ \mathbb{R}^{N} \ .
\end{align}
\begin{prop} \label{Prop:I}
Let the true signal satisfy $\textit{z} = \textit{b}$, and its noisy sampling be distributed by $\tilde{\textit{z}}_s \sim \mathcal{N}(\hat{\textit{b}}_s, \sigma^2 )$. Then, the sample mean vs. time follows $\hat{\textit{z}}_s \sim \mathcal{N}(\textit{b}, \frac{\sigma^2}{N} )$.
\end{prop}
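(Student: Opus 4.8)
The plan is to regard the sampled residual as a length-$N$ vector of independent, identically distributed Gaussian draws and to push its first two moments through the (linear) time-averaging operator, then invoke closure of the Gaussian family under linear combinations. First I would write the sample mean explicitly, $\hat{\textit{z}}_s = \tfrac{1}{N}\sum_{t=1}^{N}\tilde{\textit{z}}_s(t)$, with each entry $\tilde{\textit{z}}_s(t) = \textit{b} + \textit{w}_s(t)$ and $\textit{w}_s(t)\sim\mathcal{N}(0,\sigma^{2})$ mutually independent --- this independence (equivalently, the lack of temporal correlation) being exactly what the white-noise / WSS hypothesis behind \eqref{eq:res_omega}--\eqref{eq:res_acc} and Appendix \ref{appendix:a} provides. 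The realized offset $\hat{\textit{b}}_s$ appearing in the per-sample law is simply the maximum-likelihood estimate of the common mean $\textit{b}$, so centering the limiting distribution at $\textit{b}$ in the statement is the population-level version of the same fact.

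Next I would compute the mean by linearity of expectation, $\EV[\hat{\textit{z}}_s] = \tfrac{1}{N}\sum_{t=1}^{N}\EV[\tilde{\textit{z}}_s(t)] = \tfrac{1}{N}(N\textit{b}) = \textit{b}$, establishing unbiasedness, and then the variance using pairwise uncorrelatedness to annihilate the $N(N-1)$ cross terms, $\VAR[\hat{\textit{z}}_s] = \tfrac{1}{N^{2}}\sum_{t=1}^{N}\VAR[\tilde{\textit{z}}_s(t)] = \tfrac{1}{N^{2}}(N\sigma^{2}) = \sigma^{2}/N$. This is the quantitative heart of the claim: the variance of the estimate decays like $1/N$, which is precisely the ``error vs. number of samples'' relationship the tutorial is after.

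The only step that requires more than bookkeeping is arguing that $\hat{\textit{z}}_s$ is genuinely Gaussian rather than merely a random variable carrying the two moments above. I would handle this with a characteristic- (or moment-generating-) function argument: by independence, the characteristic function of the scaled sum factorizes into a product of Gaussian characteristic functions, whose product is itself Gaussian; matching it to the moments just computed pins the law down as $\mathcal{N}(\textit{b},\sigma^{2}/N)$. I expect this closure-under-linear-combinations fact to be the main (essentially the only) obstacle, and it is entirely standard --- a cleaner alternative is to cite the stability of the normal family directly. Setting $N=1$ then recovers the assumed single-sample law $\tilde{\textit{z}}_s\sim\mathcal{N}(\textit{b},\sigma^{2})$ and confirms the notation is consistent.
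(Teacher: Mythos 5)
Your proposal is correct and follows essentially the same route as the paper's proof: decompose the sample mean into the bias term plus the time-averaged white noise, use linearity of expectation to get unbiasedness, and use independence across time to collapse the variance of the average to $\sigma^{2}/N$. The only difference is that you make explicit the closure of the Gaussian family under linear combinations (via characteristic functions) to justify the distributional claim, a step the paper leaves implicit.
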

\begin{proof}
Upon transition to discrete-time $t \mapsto n$, using the definition of white noise, its expected value satisfies 
\begin{align} \label{eq:Mean_white}
\Ibar{\textit{w}}_s := \EV[ \textit{w}_s(t)] \mapsto \frac{1}{N} \sum_{n=1}^N \textit{w}_s [n] = \sum_{n=1}^N \EV \left[ \textit{w}_s [n] \right] = 0 \ 
\end{align}
such that the sample mean, i.e., the mean estimator, $\hat{\textit{z}}_s$, is shown to be unbiased, as given by
\begin{align} \label{eq:Mean_single}
\hat{\textit{z}}_s = \EV[ \tilde{\textit{z}}_s ] = \EV[ \hat{\textit{b}}_s + \Ibar{\textit{w}}_s ] = \EV[ \hat{\textit{b}}_s ] + \cancelto{0}{\EV[ \Ibar{\textit{w}}}_s ] = \textit{b} \ .
\end{align}
Similarly, its variance is given by 
\begin{align} \label{eq:est_var}
\VAR( \, \hat{\textit{z}}_s \, ) = \VAR( \hat{\textit{b}}_s + \Ibar{\textit{w}}_s )_{ \hat{\textit{b}} \perp \Ibar{\textit{w}} } = \cancelto{0}{\VAR( } \hat{\textit{b}}_s ) + \VAR( \Ibar{\textit{w}}_s ) \notag \\ 
= \VAR( \frac{1}{N} \sum_{n=1}^N \textit{w}_s[n] ) = \frac{1}{N^2} \sum_{n=1}^N \VAR( \textit{w}_s[n] ) =  \frac{\sigma^2}{N} \ , 
\end{align}
where $N$ is the number of discrete measurements.
\end{proof}
From Prop.~\eqref{Prop:I}, it follows that given a single stationary sensor, its sample mean exhibits unbiasedness, with a variance error that is reduced inversely proportional to $N$; namely,
\begin{align}
\delta \hat{\textit{z}}_s = \hat{\textit{z}}_s - {\textit{z}}_s \sim \mathcal{N}(0, \frac{\sigma^2}{N} ) \ .
\end{align}

\subsection{Multiple sensors model}
The next case discusses the behavior of the sample mean over $K$ equally aligned sensors, 
which are assumed to be uncorrelated. 
First, let us define the discrete terms for the mean bias and the mean noise, 
\begin{align} \label{eq:Mean_K}
\Ibar{\textit{b}}_m := \frac{1}{K} \sum_{k=1}^{K} \hat{b}_s[k] \hspace{3mm} \text{;} \hspace{3mm} 
\Ibar{\textit{w}}_m := \frac{1}{NK} \sum_{n=1}^N \sum_{k=1}^{K} \textit{w}_s [n, k] \ 
\end{align}
such that the sample mean term over the K-dimensional measurement matrix $\tilde{\textit{z}}_m \in \mathbb{R}^{N \times K} $ is subjected to 
\begin{align}
\hat{\textit{z}}_m = \frac{1}{K} \sum_{k=1}^K \tilde{\textit{z}}_m[k] = \Ibar{\textit{b}}_m + \Ibar{\textit{w} }_m \ \in \ \mathbb{R}^{{N}} .
\end{align}

\begin{prop} \label{Prop:II}
Let the true signal $\textit{z} = \textit{b}$, sampled by $K$ noisy sensors, follow $\tilde{\textit{z}}_s \sim \mathcal{N}(\hat{\textit{b}}_s, \sigma^2 )$. Then, the sample mean over the $K$-dimensional matrix follows $\hat{\textit{z}}_m \sim \mathcal{N}(\textit{b}, \frac{\sigma^2}{NK} )$.
\end{prop}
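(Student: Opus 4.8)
The plan is to proceed exactly as in the proof of Proposition~\ref{Prop:I}, exploiting the additive decomposition $\hat{\textit{z}}_m = \Ibar{\textit{b}}_m + \Ibar{\textit{w}}_m$ from \eqref{eq:Mean_K}. First I would establish unbiasedness: by linearity of expectation, $\EV[\hat{\textit{z}}_m] = \EV[\Ibar{\textit{b}}_m] + \EV[\Ibar{\textit{w}}_m]$. The noise term vanishes because each $\textit{w}_s[n,k]$ is zero-mean white noise, so the double average in \eqref{eq:Mean_K} has expectation zero, exactly as in \eqref{eq:Mean_white}. The bias term averages $K$ quantities each with mean $\textit{b}$ (cf. \eqref{eq:Mean_single}), so $\EV[\Ibar{\textit{b}}_m] = \textit{b}$, and therefore $\EV[\hat{\textit{z}}_m] = \textit{b}$.

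Next I would compute the variance. Since the per-sensor bias contributions are constant under \eqref{eq:sensorStationary}, $\VAR(\Ibar{\textit{b}}_m) = 0$; combined with $\Ibar{\textit{b}}_m \perp \Ibar{\textit{w}}_m$ this gives $\VAR(\hat{\textit{z}}_m) = \VAR(\Ibar{\textit{w}}_m)$. Expanding the double sum in \eqref{eq:Mean_K},
\begin{align}
\VAR(\Ibar{\textit{w}}_m) = \frac{1}{(NK)^2} \sum_{n=1}^N \sum_{k=1}^K \VAR(\textit{w}_s[n,k]) = \frac{NK\,\sigma^2}{(NK)^2} = \frac{\sigma^2}{NK} \ ,
\end{align}
where the cross-covariance terms drop out because each sensor's noise is white in time and distinct sensors are assumed uncorrelated. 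Finally, $\hat{\textit{z}}_m$ is a finite linear combination of jointly independent Gaussian variables and is hence itself Gaussian; the two moments above then identify it as $\hat{\textit{z}}_m \sim \mathcal{N}(\textit{b}, \frac{\sigma^2}{NK})$, which also yields $\delta\hat{\textit{z}}_m = \hat{\textit{z}}_m - \textit{z}_m \sim \mathcal{N}(0, \frac{\sigma^2}{NK})$ as the multi-sensor analogue of Proposition~\ref{Prop:I}.

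The main obstacle I anticipate is not the algebra, which parallels \eqref{eq:est_var}, but the careful bookkeeping behind the collapse of the double sum: one must simultaneously invoke the in-time whiteness of each sensor and the modeling assumption that the $K$ sensors are mutually uncorrelated, so that all $NK$ noise samples contribute additively to the variance, while also arguing that the (possibly sensor-dependent) biases are deterministic and therefore average to the common population value $\textit{b}$ with zero variance. A secondary point worth making explicit is that $K$ and $N$ enter symmetrically here only because the sensors are homoscedastic with common variance $\sigma^2$; relaxing that would replace $\sigma^2/K$ by a harmonic-type average, which is outside the stated hypotheses.
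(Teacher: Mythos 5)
Your proposal is correct and follows essentially the same route as the paper's proof: decompose $\hat{\textit{z}}_m = \Ibar{\textit{b}}_m + \Ibar{\textit{w}}_m$, use linearity of expectation for unbiasedness, and use independence to collapse the double sum into $NK\sigma^2/(NK)^2 = \sigma^2/(NK)$. Your added remarks on Gaussianity and on the homoscedasticity needed for the symmetric roles of $N$ and $K$ are sound refinements but do not change the argument.
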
 
\begin{proof}
Further to \eqref{eq:Mean_white}, the sum of $K$ zero-mean Gaussians is zero, and using \eqref{eq:Mean_single}, the mean of $K$ unbiased estimators is also unbiased. Thereby, the expectation of the sample mean is
\begin{align} \label{eq:K_estimator}
\hat{\textit{z}}_m = \EV[ \tilde{\textit{z}}_m ] = \EV[ \Ibar{\textit{b}}_m + \Ibar{\textit{w}}_m ] = \EV[ \Ibar{\textit{b}}_m ] + \cancelto{0}{\EV[ \Ibar{\textit{w}}}_m ] = \textit{b} \ ,
\end{align}
and due to statistical independence, its variance error exhibits
\begin{align}  \label{eq:Var_K}
\VAR( \, \hat{\textit{z}}_m  \, ) &= \VAR( \Ibar{\textit{b}}_m + \Ibar{\textit{w}}_m )_{ \Ibar{\textit{b}} \perp \textit{w}} = \cancelto{0}{\VAR( }  \Ibar{\textit{b}}_m ) + \VAR( \Ibar{\textit{w}}_m ) \notag \\
&= \VAR( \Ibar{\textit{w}}_m ) = \frac{1}{(NK)^2} \sum_{n=1}^N \sum_{k=1}^K \VAR( \textit{w}_s [n,k] ) \notag \\ 
&= \frac{1}{(NK)^2} (NK \sigma^2 ) = \frac{\sigma^2}{(NK)} \ .
\end{align}
\end{proof}
As can be seen from Prop.~\eqref{Prop:II}, the sample mean is also unbiased, but now its variance error decreases by both factors of time ($N$) and number of sensors ($K$), as given by 
\begin{align}
\delta \hat{\textit{z}}_m(t) &= \hat{\textit{z}}_m(t) - {\textit{z}}_m \sim \mathcal{N}(0, \frac{\sigma^2}{NK} ) \ .
\end{align}
To conclude, this subsection presents how bias and noise are embodied in the estimated parameters, which are the sample statistics. As shown below, their exact determination is critical for both system initialization and bias-compensation, with the aim of mitigating potential INS drift.

\section{System model} \label{sec:formulation} 
Traditionally, the navigation states are observed indirectly through minimizing the residual between the true states ${\mathbf{x}}$ and the state estimator $\hat{\mathbf{x}}$, given by 
\begin{equation} \label{eq:residual}
\delta \mathbf{x} = \mathbf{x} - \hat{\mathbf{x}} \ .
\end{equation}
This enables better conditioning of the optimization problem, as inertial measurements better reflect the changes themselves, rather than the total navigation states \cite{roumeliotis1999circumventing}. To maintain consistency with a Kalman filter implementation, all subcomponents are expressed in the navigation frame
\begin{equation}\label{eq_iInsErrorState}
\delta \mathbf{x} = \left[ \begin{array}{ccccc} 
\delta\mathbf{p}^{n} & \delta\mathbf{v}^{n} & \mathbf{\epsilon}^{n} & {\textbf{\textit{b}}}_a & {\textbf{\textit{b}}}_g \end{array} \right]^{\TT} \  \in \mathbb{R}^{15} \ ,
\end{equation}
where $\delta\mathbf{p}^{n}$ is the position error vector, $\delta\mathbf{v}^{n}$ is the velocity error vector, $\boldsymbol{\epsilon}^{n}$ is the misalignment angle errors, $\delta \boldsymbol{f}_{ib}^b \approx {\textbf{\textit{b}}}_a$ is the accelerometer bias residuals, and $\delta \boldsymbol{\omega}_{ib}^b \approx {\textbf{\textit{b}}}_g$ is the gyro bias residuals. 
Under WSS conditions, and assuming that the system is linear time-invariant (LTI), the differential relationship is represented using a linear state-space model \cite{Groves2013}
\begin{align} \label{eq:errorModel}
\delta\dot{\mathbf{x}}(t) &= \mathbf{F} \delta\mathbf{x}(t) + \mathbf{G} \textbf{\textit{w}}(t) \ .
\end{align}
To simplify the formulation and without loss of generality, both body and navigation frames are assumed to coincide, i.e., $\mathbf{T}_{b}^{n} = \mathbf{I}_{3}$, and secondary effects such as the Earth and transport rates are neglected. Thereby, the system matrix which govern the state vector evolution in time, is given by:
\begin{equation} \label{eq_Fmat}
\mathbf{F} = \renewcommand{\arraystretch}{1.15} 
\left[ \begin{array}{ccccc} 
\mathbf{0}_{3} & \mathbf{I}_{3} & \mathbf{0}_{3} & \mathbf{0}_{3} & \mathbf{0}_{3} \\
\mathbf{0}_{3} & \mathbf{0}_{3} & \mathbf{F}_{23}& \mathbf{I}_{3} & \mathbf{0}_{3} \\
\mathbf{0}_{3} & \mathbf{0}_{3} & \mathbf{0}_{3} & \mathbf{0}_{3} & \mathbf{I}_{3} \\
\mathbf{0}_{3} & \mathbf{0}_{3} & \mathbf{0}_{3} & \mathbf{0}_{3} & \mathbf{0}_{3} \\
\mathbf{0}_{3} & \mathbf{0}_{3} & \mathbf{0}_{3} & \mathbf{0}_{3} & \mathbf{0}_{3} 
\end{array}
\right] \ .
\end{equation}
For brevity, all submatrices are $3\times 3$, and $\mathbf{F}_{23}$ denotes the skew-symmetric matrix $[ \, \times \, ]$ of the accelerometer outputs. Due to the negligible product between the axial biases \cite{titterton2004strapdown}, the gravity vector dominates the term, as follows:
\begin{align} \label{eq:skewSym}
\mathbf{F}_{23} = [  -\textbf{f}^n {\times}  ] = [  \textbf{g}^n {\times}  ] = 
\begin{bmatrix}
0 & -\text{g} & 0 \\
\text{g} & 0 & 0 \\
0 & 0 & 0
\end{bmatrix} \ .
\end{align}
The shaping matrix $\mathbf{G}$ relates the stochastic inputs of the process noise vector $\textbf{\textit{w}}$, thus taking into account model imperfections and guaranteeing numerical stability. 
\begin{equation} \label{eq_Gmat} 
\mathbf{G} = \renewcommand{\arraystretch}{1.2}
\left[ \begin{array}{cccc}
\mathbf{0}_{3} & \mathbf{0}_{3} & \mathbf{0}_{3} & \mathbf{0}_{3} \\
\mathbf{I}_{3} & \mathbf{0}_{3} & \mathbf{0}_{3} & \mathbf{0}_{3} \\
\mathbf{0}_{3} & \mathbf{I}_{3} & \mathbf{0}_{3} & \mathbf{0}_{3} \\
\mathbf{0}_{3} & \mathbf{0}_{3} & \mathbf{I}_{3} & \mathbf{0}_{3} \\ 
\mathbf{0}_{3} & \mathbf{0}_{3} & \mathbf{0}_{3} & \mathbf{I}_{3}
\end{array}
\right] \ \text{;} \ \,
\textbf{\textit{w}} = \left[ 
\begin{array}{c} 
\textbf{\textit{w}}_a \\ \textbf{\textit{w}}_g \\ \textbf{\textit{w}}_{ab} \\ \textbf{\textit{w}}_{gb} \\
\end{array} \right] .
\end{equation}
The accelerometers and gyros noise are given by $\textbf{\textit{w}}_a$ and $\textbf{\textit{w}}_g$, and their in-run bias variation is given by $\textbf{\textit{w}}_{ab}$, and $ \textbf{\textit{w}}_{gb}$. 
\\
Being zero-mean white Gaussian, the process noise is assumed to be uncorrelated with the state estimates:
\begin{align} \label{eq:sys_cov}
\COV ( \textbf{\textit{w}}, \delta \mathbf{x}) = \EV[ \textbf{\textit{w}} \delta \mathbf{x}^{\TT} ] = \mathbf{0} \ , \ \EV[ \textbf{\textit{w}} ]=0 \ .
\end{align}

\subsection{Continuous-time propagation} 
Using a state-space model \eqref{eq:errorModel}, the nominal state vector integrated over time interval $\tau=t-t_0$, is given by 
\begin{align} \label{eq:sys_sol}
{\mathbf{x}}(\tau) = \mathbf{\Phi}(\tau) {\mathbf{x}}(t_0) + \int_{t_0}^t \mathbf{\Phi}(\tau) \mathbf{G} \textbf{\textit{w}} (\tau) d\tau \ , 
\end{align}
where the integrated dynamics are used to propagate the initial conditions at time zero $t_0$, using the state-transition matrix
\begin{equation} \label{eq:phi_Fmat} 
\mathbf{\Phi}(\tau) = e^{ \mathbf{F}\tau } = 
\left[\arraycolsep=2.8pt\def\arraystretch{1.5}
\begin{array}{ccccc} 
\mathbf{I}_{3} & \mathbf{I}_{3} \tau & \frac{1}{2} \mathbf{F}_{23} \tau^2 & \frac{1}{2} \mathbf{I}_{3} \tau^2 & \frac{1}{6} \mathbf{F}_{23} \tau^3 \\
\mathbf{0}_{3} & \mathbf{I}_{3} & \mathbf{F}_{23}\tau & \mathbf{I}_{3}\tau & \frac{1}{2} \mathbf{F}_{23} \tau^2 \\
\mathbf{0}_{3} & \mathbf{0}_{3} & \mathbf{I}_{3} & \mathbf{0}_{3} & \mathbf{I}_{3} \tau \\
\mathbf{0}_{3} & \mathbf{0}_{3} & \mathbf{0}_{3} & \mathbf{I}_{3} & \mathbf{0}_{3} \\
\mathbf{0}_{3} & \mathbf{0}_{3} & \mathbf{0}_{3} & \mathbf{0}_{3} & \mathbf{I}_{3} 
\end{array} 
\right] .
\end{equation}
The autocorrelation function (ACF) describes the correlation of a signal with its delayed copy over time-lag $\tau$. Being uncorrelated, the non-zero value is obtained only for zero-lag,
\begin{align} \label{eq:AutoCorr}
&\mathbf{R}_{xx}(\tau) = \EV[ \textbf{\textit{w}}(t) \textbf{\textit{w}}(t+\tau)^{\TT}] = \COV \left( \textbf{\textit{w}}(t), \textbf{\textit{w}}(t+\tau) \right)
\delta(\tau) \Big|_{\tau=0} \notag \\[2mm] 
& \hspace{2mm} = \VAR( \textbf{\textit{w}} )  = \DIAG \left(  [ \, \mathbf{1}_3^{\TT} \sigma_a^2 , \, \mathbf{1}_3^{\TT} \sigma_g^2 , \, \mathbf{1}_3^{\TT} \sigma_{ab}^2 , \, \mathbf{1}_3^{\TT} \sigma_{gb}^2 ] \, \right).  
\end{align} 
For simplicity, noise is assumed to be isotropic, i.e., equally intense in all three sensor axes \cite{jazwinski2007stochastic}. Next, using the Wiener-Khinchin theorem, the PSD matrix of the process noise is obtained by the Fourier transform, 
\begin{align} \label{eq:PSD}
\mathbf{S}_{xx} &= \int_{t_0}^t \mathbf{R}_{xx}(\tau) e^{-j \omega \tau} d \tau \ , %
\end{align}
exhibiting frequency-invariance due to its whiteness, thus enabling the continuous process noise covariance by (detailed derivation in Appendix \ref{appendix:b})
\begin{align} \label{eq:processCov}
\mathbf{Q}(\tau) &= \EV[ \mathbf{G} \textbf{\textit{w}}(\tau) \textbf{\textit{w}}(\tau)^{\TT} \mathbf{G} ] \notag \\ 
&= \int_{t_0}^t \mathbf{\Phi}(\tau) \mathbf{G} \, \mathbf{S}_{xx} \, \mathbf{G}^{\TT} \mathbf{\Phi}(\tau)^{\TT} d\tau \ . 
\end{align}

\subsection{State observer}
Upon transition from parametric estimation to state estimation, the GT population now refers to the entire latent system model. Also here, $s$ and $m$ subscripts are used to emphasize the number of sensors. 
Let us establish the time relation between our state estimates and the derivative of the true system parameters, 
\begin{align}
\EV[ \dot{\mathbf{x}}(t) ] = \frac{\partial}{\partial t} \hat{\mathbf{x}}(t) \ \Rightarrow \ \dot{\hat{\mathbf{x}}}(t) = \mathbf{F} \, \hat{\mathbf{x}}(t) \ ,
\end{align}
introducing a homogeneous solution by
\begin{align} \label{eq:observer}
\hat{\mathbf{x}}( \tau ) = \EV[ \mathbf{x}( \tau ) ] = \mathbf{\Phi}(\tau) \hat{\mathbf{x}}(t_0) \ .
\end{align}
By subtracting \eqref{eq:observer} from \eqref{eq:sys_sol}, we define the following residual error vector
\begin{align} \label{eq:deltaSol}
\delta {\mathbf{x}} ( \tau ) = {\mathbf{x}}( \tau ) - \EV[ \mathbf{x}( \tau ) ] = \mathbf{\Phi}(\tau) \delta {\mathbf{x}} (t_0) + \mathbf{G} \textbf{\textit{w}}(\tau) \ ,
\end{align}
whose corresponding covariance matrix is given by
\begin{align}
\mathbf{P}(\tau) &= \EV \bigg[ \, \delta {\mathbf{x}}(\tau) \, \delta  {\mathbf{x}}(\tau)^{\TT}  \bigg]_{{ \delta x \perp \textit{w}}}   \\
&= \EV \bigg[ \mathbf{\Phi}(\tau) \delta {\mathbf{x}} (t_0) \delta {\mathbf{x}} (t_0)^{\TT} \mathbf{\Phi}(\tau)^{\TT} + \mathbf{G} \textbf{\textit{w}}(\tau) \textbf{\textit{w}}(\tau)^{\TT} \mathbf{G}^{\TT} \bigg] \ . \notag
\end{align}
After rearranging and simplifying, the continuous error state covariance is given by
\begin{align} \label{eq:errorCov} 
\mathbf{P}(\tau) = \mathbf{\Phi}(\tau)  \mathbf{P}_0 \, \mathbf{\Phi}(\tau)^{\TT} + \mathbf{Q}(\tau) \ .
\end{align}
More on its derivation and relation to the well-known Riccati and Lyapunov equations is in \cite{mitter2005information}.
\begin{prop} \label{Prop:III}
Let $\delta \mathbf{x}(t)$ be a random variable for $\forall \, t>0$. If its a priori estimate is given by $\delta{\mathbf{x}}_0 \sim \mathcal{N} \left( \boldsymbol{\mu}_0, \mathbf{P}_0 \right) $, then the time sequence $\{ \delta \mathbf{x}(t) \}$ follows
$$
\delta{\mathbf{x}}(t) \sim \mathcal{N} \left( \mathbf{\Phi}(\tau)\boldsymbol{\mu}_0, \mathbf{P}(t) \right) .
$$ 
\end{prop}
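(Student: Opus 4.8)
The plan is to build directly on the closed-form error propagation \eqref{eq:deltaSol}, which already expresses $\delta\mathbf{x}(\tau)$ as an affine image of the initial error $\delta\mathbf{x}(t_0)$ plus the driving-noise contribution $\mathbf{G}\textbf{\textit{w}}(\tau)$. The argument then splits into three moves: (i) establish that $\delta\mathbf{x}(\tau)$ is Gaussian, (ii) identify its mean with $\mathbf{\Phi}(\tau)\boldsymbol{\mu}_0$, and (iii) identify its covariance with $\mathbf{P}(\tau)$ as in \eqref{eq:errorCov}. Gaussianity follows from the two standard closure properties of the normal family: an affine transformation of a Gaussian vector is Gaussian, and the sum of two independent Gaussian vectors is Gaussian. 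Since $\delta\mathbf{x}(t_0)\sim\mathcal{N}(\boldsymbol{\mu}_0,\mathbf{P}_0)$ by hypothesis, and the process-noise term is — by the Wiener--Khinchin construction leading to \eqref{eq:processCov} — a zero-mean Gaussian that is independent of $\delta\mathbf{x}(t_0)$ on account of \eqref{eq:sys_cov}, the map $\delta\mathbf{x}(t_0)\mapsto\mathbf{\Phi}(\tau)\delta\mathbf{x}(t_0)+\mathbf{G}\textbf{\textit{w}}(\tau)$ preserves normality.

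For the mean, I would take expectations in \eqref{eq:deltaSol} and use linearity together with $\EV[\textbf{\textit{w}}]=0$ from \eqref{eq:sys_cov}, obtaining $\EV[\delta\mathbf{x}(\tau)]=\mathbf{\Phi}(\tau)\EV[\delta\mathbf{x}(t_0)]=\mathbf{\Phi}(\tau)\boldsymbol{\mu}_0$, consistent with the homogeneous observer relation \eqref{eq:observer}. For the covariance, I would form $\EV[(\delta\mathbf{x}(\tau)-\mathbf{\Phi}(\tau)\boldsymbol{\mu}_0)(\delta\mathbf{x}(\tau)-\mathbf{\Phi}(\tau)\boldsymbol{\mu}_0)^{\TT}]$, expand the product of the two summands in \eqref{eq:deltaSol}, note that the cross terms vanish because $\delta\mathbf{x}_0\perp\textbf{\textit{w}}$, and recognize the two surviving terms as $\mathbf{\Phi}(\tau)\mathbf{P}_0\mathbf{\Phi}(\tau)^{\TT}$ and $\mathbf{Q}(\tau)$ — i.e. precisely \eqref{eq:errorCov}. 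Combining the three facts yields $\delta\mathbf{x}(\tau)\sim\mathcal{N}(\mathbf{\Phi}(\tau)\boldsymbol{\mu}_0,\mathbf{P}(\tau))$, and since $\tau=t-t_0$ is an arbitrary positive lag, the conclusion holds along the whole trajectory $\{\delta\mathbf{x}(t)\}_{t>0}$.

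The one delicate point — and the step I would treat with most care — is the Gaussianity and independence of the stochastic-integral term $\int_{t_0}^{t}\mathbf{\Phi}(\tau)\mathbf{G}\textbf{\textit{w}}(\tau)\,d\tau$, since white noise is only a generalized process. The cleanest justification views this integral as an $L^2$ limit of Riemann--Stieltjes sums over increments of the underlying Wiener process: each such sum is a finite linear combination of independent Gaussian increments, hence Gaussian, and the Gaussian family is closed under $L^2$ limits, so the limit is Gaussian with the covariance computed in Appendix~\ref{appendix:b}; its independence of $\delta\mathbf{x}(t_0)$ is inherited from \eqref{eq:sys_cov}. Alternatively, staying entirely in discrete time, the same result follows by induction on the sampled recursion $\delta\mathbf{x}_{k+1}=\mathbf{\Phi}_k\,\delta\mathbf{x}_k+\textbf{\textit{w}}_k$: the base case is the hypothesis on $\delta\mathbf{x}_0$, and the inductive step is once more ``affine image plus independent Gaussian.'' Everything else is bookkeeping already performed in \eqref{eq:deltaSol}--\eqref{eq:errorCov}.
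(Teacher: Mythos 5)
Your proof is correct and follows the same route as the paper's: linear (affine) propagation of a Gaussian initial error plus independent Gaussian process noise preserves normality, with the mean and covariance read off from \eqref{eq:deltaSol} and \eqref{eq:errorCov}. The paper's own proof is a one-sentence assertion of exactly this closure property, so your version simply carries out the moment computations and the stochastic-integral (or discrete-time induction) justification that the paper leaves implicit.
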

\begin{proof}
Since time propagation of $\{ \delta x(t_0), ..., \delta x(t_n) \}$ is given by matrix $\mathbf{\Phi}(\tau)$, and based on Gaussianity of all variables, the joint distribution of sequence $\{ \delta \mathbf{x}(t) \}$ must obey a multivariate normal distributions, for a set of time points $\{t_0, ..., t_n \}$.
\end{proof}
By utilizing Prop.~\eqref{Prop:III}, we establish the assumption that the evolution of all system states, including initial conditions and system dynamics, follow a linear Gaussian process. 
\subsection{Initial conditions}
To study how initial conditions, subscript $0$, are triggered in time, the Gaussian nature of \eqref{eq:deltaSol} is closely examined. 
%
\subsubsection{Bias effect} Being static, all initial kinematics can be constrained to zero \cite{klein2010pseudo, engelsman2023information}, leaving solution to propagate deterministically, depending only on the estimated biases.
\begin{equation} \label{eq:delta_gauss}
\EV[ \delta \hat{\mathbf{x}}( \tau ) ] = \mathbf{\Phi} (\tau ) \delta \hat{\mathbf{x}}( t_0 ) = 
\bigg[ \mathbf{\Phi} (\tau ) \bigg] \left[ \begin{array}{c} 
\cancelto{0}{\delta \hat{\mathbf{p}}}_0^{n} \\ \cancelto{0}{\delta \hat{\mathbf{v}}}_0^{n} \\ \cancelto{0}{\, \hat{\mathbf{\epsilon}}_0}^n \\ \hat{{\textbf{\textit{b}}}}_a \\ \hat{{\textbf{\textit{b}}}}_g
\end{array} \right] \ .
\end{equation}
But since the bias states are assumed constant \eqref{eq:sensorStationary}, the only relevant states to track in \eqref{eq:delta_gauss} are the kinematics, such that
\begin{equation} \label{eq:errExpected}
\EV \def\arraystretch{1.4} \left[
\begin{array}{c} 
\delta \hat{\mathbf{p}}^{n}(\tau) \\ \delta \hat{\mathbf{v}}^{n}(\tau) \\ \hat{\mathbf{\epsilon}}^{n}(\tau)
\end{array} \right] = 
\arraycolsep=1.5pt\def\arraystretch{1.7}
\left[ \begin{array}{ccc} 
\frac{1}{2} \mathbf{I}_{3} \tau^2 & \frac{1}{6} \mathbf{F}_{23} \tau^3 \\
\mathbf{I}_{3}\tau & \frac{1}{2} \mathbf{F}_{23} \tau^2 \\
\mathbf{0}_{3} & \mathbf{I}_{3} \tau \\ \end{array} \right] 
\left[ \begin{array}{c} 
\hat{\textbf{\textit{b}}}_a \\ \hat{\textbf{\textit{b}}}_g \end{array} \right] \ .
\end{equation}
As shown, some state predictions propagate quadratically and cubically in time, with respect to the initial sensors residuals; thereby, the single sensor state vector can be abbreviated to
\begin{equation} \label{eq:errExpectedShort}
\EV [ \delta \hat{\mathbf{x}}_{s} ( \tau ) ] = \mathbf{\Phi}_{\text{kin.}} (\tau ) \ \hat{{\textbf{\textit{b}}}}_{0, s} \ ,
\end{equation}
where the subscript kin. denotes the kinematic states. Similarly, given multiple sensors, the mean error vector is shortened to
\begin{equation} \label{eq:errExpectedShort_m}
\EV [ \delta \hat{\mathbf{x}}_{m} ( \tau ) ] = \mathbf{\Phi}_{\text{kin.}} (\tau ) \ \hat{{\textbf{\textit{b}}}}_{0, m} \ .
\end{equation}
\begin{prop} \label{Prop:IV}
When averaged over $K$ identical and equally aligned sensors, the expectation of the mean error vector \eqref{eq:errExpectedShort_m} decreases inversely proportional to their sum.
\end{prop}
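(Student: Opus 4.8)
The plan is to push the variance–reduction mechanism of Prop.~\eqref{Prop:II} through the (constant) kinematic propagator $\mathbf{\Phi}_{\text{kin.}}(\tau)$ that appears in \eqref{eq:errExpectedShort_m}. First I would substitute the definition of the multi-sensor bias estimate from \eqref{eq:Mean_K}, namely $\hat{{\textbf{\textit{b}}}}_{0,m}=\frac{1}{K}\sum_{k=1}^{K}\hat{{\textbf{\textit{b}}}}_{0,s}[k]$, into \eqref{eq:errExpectedShort_m}, and then pull the matrix $\mathbf{\Phi}_{\text{kin.}}(\tau)$ inside the finite sum (linearity), which yields
\[
\EV\!\left[\delta\hat{\mathbf{x}}_{m}(\tau)\right]=\mathbf{\Phi}_{\text{kin.}}(\tau)\,\frac{1}{K}\sum_{k=1}^{K}\hat{{\textbf{\textit{b}}}}_{0,s}[k]=\frac{1}{K}\sum_{k=1}^{K}\EV\!\left[\delta\hat{\mathbf{x}}_{s}^{(k)}(\tau)\right],
\]
where $\delta\hat{\mathbf{x}}_{s}^{(k)}$ is the single-sensor error vector of \eqref{eq:errExpectedShort} built from the $k$-th sensor. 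This already exposes the claimed structure: the $K$-sensor expected error is the arithmetic mean of the $K$ single-sensor expected errors, i.e.\ it carries the $1/K$ — ``one over their sum'' — weight explicitly, and in particular it equals $1/K$ times the error one would accumulate by stacking the $K$ sensors' biases.

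To turn ``decreases'' into a quantitative statement I would then invoke that the sensors are identical, so the per-sensor estimates $\hat{{\textbf{\textit{b}}}}_{0,s}[k]$ are i.i.d.\ with a common second moment $v$ (concretely $v=\sigma^{2}/N$ from Prop.~\eqref{Prop:I} when only the within-run averaging error is counted, and a larger value if the across-array turn-on scatter is modelled as a zero-mean ensemble). Repeating verbatim the independence bookkeeping of \eqref{eq:Var_K} gives $\VAR\!\bigl(\hat{{\textbf{\textit{b}}}}_{0,m}\bigr)=v/K$, and propagating through the kinematic block,
\[
\mathbf{P}_{m}(\tau)=\mathbf{\Phi}_{\text{kin.}}(\tau)\,\VAR\!\bigl(\hat{{\textbf{\textit{b}}}}_{0,m}\bigr)\,\mathbf{\Phi}_{\text{kin.}}(\tau)^{\TT}=\frac{1}{K}\,\mathbf{P}_{s}(\tau),
\]
so the dispersion of the propagated mean-error vector is exactly the single-sensor result of \eqref{eq:errExpectedShort} divided by $K$, while still inheriting the $\tau^{2}$ (velocity) and $\tau^{3}$ (position) growth contributed by $\mathbf{\Phi}_{\text{kin.}}(\tau)$; taking expectations over the noise and, if applicable, over the array ensemble then makes the mean-error vector concentrate toward its ensemble mean at rate $1/K$ by the weak law of large numbers.

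The main obstacle is conceptual rather than computational: for \emph{literally} identical sensors the common deterministic part of the bias is reproduced by every element of the array and is \emph{not} attenuated by averaging, so the statement cannot mean that $\EV[\delta\hat{\mathbf{x}}_{m}(\tau)]$ shrinks pointwise for an arbitrary fixed bias value. I would therefore state at the outset that ``identical'' is meant in the i.i.d.\ sense and that the quantity shown to decay as $1/K$ is the contribution of the part of the bias that varies across the array (equivalently, the residual estimation error after averaging), which is the natural continuation of Props.~\eqref{Prop:I}–\eqref{Prop:II}. Once that is fixed, no step is hard: pulling the propagator out as a constant linear map reduces everything to the scalar computations already carried out in \eqref{eq:Mean_white}–\eqref{eq:Var_K}.
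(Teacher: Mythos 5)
Your first paragraph is exactly the paper's proof: substitute $\hat{{\textbf{\textit{b}}}}_{0,m}=\frac{1}{K}\sum_{k}\hat{{\textbf{\textit{b}}}}_{0,s}[k]$ from \eqref{eq:Mean_K} into \eqref{eq:errExpectedShort_m} and use linearity of $\mathbf{\Phi}_{\text{kin.}}(\tau)$ to exhibit the mean error as $1/K$ times the sum of the single-sensor errors, which is all the paper establishes. Your additional variance propagation and the caveat that a common deterministic bias is not attenuated by averaging (so ``identical'' must be read in the i.i.d.\ sense for the error to actually shrink) go beyond the paper's one-line argument and correctly flag a real looseness in how the proposition is phrased, but they are not needed for, and do not change, the core derivation.
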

\begin{proof}
Further to \eqref{eq:Mean_K}, using $K$ sensors, the mean estimate of the initial bias is reduced inversely proportional to $K$:
\begin{align}
\hat{{\textbf{\textit{b}}}}_{0, m} = \Ibar{{\textbf{\textit{b}}}}_{0, m} = \frac{1}{K} \sum_{k=1}^{K} \hat{{\textbf{\textit{b}}}}_{0, s} [k] \ .
\end{align}
Being linear, system \eqref{eq:errExpectedShort_m} is homogeneous under scaling, such that both $K$ and $\mathbf{\Phi}_{\text{kin.}}$ can be factored out as follows:
\begin{align}
\EV [ \delta \hat{\mathbf{x}}_{m} ( \tau ) ] = \EV [ \sum_{k=1}^K \delta \hat{\mathbf{x}}_{s} ( \tau )  ] = \frac{1}{K} \mathbf{\Phi}_{\text{kin.}} ( \tau ) \sum_{k=1}^K \hat{{\textbf{\textit{b}}}}_{0, s} [k] \ .
\end{align}
In the same manner, it can be further generalized to \eqref{eq:errExpected}, while maintaining all other conditions the same.
\end{proof}
According to Prop.~\eqref{Prop:IV}, in the absence of dynamic inputs, error states are shown to be reduced proportionally to 1/$K$, in correspondence to the reduced bias estimates.
\subsubsection{Noise effect} 
The random variations over time, originating in the stochastic inputs, tend to cancel out each other due to the law of large numbers. Yet, their characterization provides valuable insight about the uncertainty of an individual realization, in terms error of dispersion. The initial values of \eqref{eq:errorCov} lie in $\mathbf{P}_0$, but are invariant to the sensors' errors. Their estimation is mostly performed through careful heuristics, relying on knowledge and intuition of the system designers \cite{verhaegen1986numerical}. 
In contrast, the process noise covariance propagates the PSD matrix in matrix, which embodies the spectral intensity and correlation of the process noise vector. 

\begin{prop} \label{Prop:V}
When averaged over $K$ identical and equally aligned sensors, the covariance of the process noise \eqref{eq:processCov}, decreases inversely proportional to $K$.
\end{prop}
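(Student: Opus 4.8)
The plan is to exploit the linearity of the process-noise covariance $\mathbf{Q}(\tau)$ in the PSD matrix $\mathbf{S}_{xx}$, together with the variance-reduction result already established for the sample mean in Prop.~\eqref{Prop:II}. First I would replace the single-sensor process noise vector $\textbf{\textit{w}}(\tau)$ appearing in \eqref{eq:processCov} by the per-component average over the $K$ sensors, $\Ibar{\textbf{\textit{w}}}_m(\tau) := \tfrac{1}{K}\sum_{k=1}^{K}\textbf{\textit{w}}_s(\tau,k)$, whose entries collect the averaged accelerometer, gyro and in-run bias driving noises. Since the sensors are assumed identical, equally aligned and mutually uncorrelated, the autocorrelation of the averaged noise inherits the whiteness of the individual ones, and by the same computation as in \eqref{eq:Var_K} its zero-lag value obeys $\mathbf{R}_{xx}^{(m)}(\tau) = \tfrac{1}{K}\,\mathbf{R}_{xx}(\tau)$; the cross-terms between distinct sensors vanish by independence, so no off-diagonal contamination is introduced and the block structure of \eqref{eq:AutoCorr} is preserved.

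Second, I would push this factor through the Wiener--Khinchin step \eqref{eq:PSD}: the Fourier transform is linear, hence $\mathbf{S}_{xx}^{(m)} = \tfrac{1}{K}\,\mathbf{S}_{xx}$, and the resulting PSD is still frequency-invariant. Third, I would substitute into \eqref{eq:processCov}; because $\mathbf{\Phi}(\tau)$ and the shaping matrix $\mathbf{G}$ are deterministic and do not depend on $K$, the scalar factors out of the matrix integral,
\begin{align}
\mathbf{Q}_m(\tau) = \int_{t_0}^{t} \mathbf{\Phi}(\tau)\,\mathbf{G}\,\tfrac{1}{K}\mathbf{S}_{xx}\,\mathbf{G}^{\TT}\mathbf{\Phi}(\tau)^{\TT}\,d\tau = \tfrac{1}{K}\,\mathbf{Q}_s(\tau) \ ,
\end{align}
which is exactly the claimed $1/K$ decay. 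As a consistency check one can then propagate this through \eqref{eq:errorCov}: with $\mathbf{P}_0$ unchanged, only the $\mathbf{Q}(\tau)$ summand shrinks, in line with Prop.~\eqref{Prop:IV} for the deterministic part.

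The main obstacle I anticipate is not the final algebra but the justification that averaging preserves the delta-correlated (white) structure — i.e. that $\EV[\Ibar{\textbf{\textit{w}}}_m(t)\,\Ibar{\textbf{\textit{w}}}_m(t+\tau)^{\TT}]$ is still proportional to $\delta(\tau)$ — and, crucially, that the cross-sensor independence is precisely what annihilates the $K(K-1)$ cross-terms, leaving only the $K$ diagonal contributions that yield a $1/K$ (rather than $1/K^2$) scaling. Once this spectral reduction $\mathbf{S}_{xx}^{(m)}=\mathbf{S}_{xx}/K$ is secured, the isotropy assumption on the noise keeps every block of $\mathbf{S}_{xx}$ proportional to identity, and factoring the constant out of the integral is routine.
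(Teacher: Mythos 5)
Your proposal is correct and follows essentially the same route as the paper: reduce the zero-lag autocorrelation (equivalently, the variance of the sensor-averaged noise) by $1/K$ using independence of the $K$ sensors, then factor that scalar through the linear Wiener--Khinchin and integration steps to obtain $\mathbf{Q}_m(\tau) = \tfrac{1}{K}\mathbf{Q}_s(\tau)$. Your version is somewhat more explicit about the intermediate PSD scaling $\mathbf{S}_{xx}^{(m)} = \mathbf{S}_{xx}/K$ and the preservation of whiteness under averaging, which the paper compresses into ``all operators are linear,'' but the substance is identical.
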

\begin{proof}
Let an estimate of a time series follow $\hat{\boldsymbol{\epsilon}}_{s} \sim \mathcal{N}(0, \sigma^2 )$. Further to \eqref{eq:Var_K}, given $K$ sensors, 
the variance error of their sample mean is reduced by
\begin{align}
\VAR( \hat{ \boldsymbol{\epsilon}}_m ) = \VAR \left( \frac{1}{K} \sum_{k=1}^K  \hat{\boldsymbol{\epsilon}}_s [k] \right) = \frac{1}{K^2} (K \sigma^2 ) = \frac{\sigma^2}{K} \ , 
\end{align}
leading to the following equality
\begin{align} \label{eq:VarIdentity}
\VAR( \hat{ \boldsymbol{\epsilon} }_m ) = \frac{1}{K} \VAR ( \hat{ \boldsymbol{\epsilon} }_s ) \ , 
\end{align}
such that its overall distribution follows in fact
\begin{align}
\hat{ \boldsymbol{\epsilon} }_m \ \sim \ \mathcal{N}(0, \sigma^2/K ) . 
\end{align}
Next, to generalize it to the entire state-space, consider a process noise vector $\textbf{\textit{w}}_s \in \mathbb{R}^{15}$, drawn from an uncorrelated multivariate normal distribution. For a single sensor realization, its error covariance matrix is
\begin{align}
\COV \left( \hat{\textbf{\textit{w}}}_s, \hat{\textbf{\textit{w}}}_s \right) = \DIAG \left( \VAR ( \hat{\textbf{\textit{w}}}_s ) \right) \ ,
\end{align}
however, by averaging each state variable over $K$ sensors, the entire noise covariance follows identity \eqref{eq:VarIdentity}, leading to
\begin{align}
\COV \left( \hat{ \textbf{\textit{w}}}_m, \hat{ \textbf{\textit{w}}}_m \right) = \DIAG \left( \VAR ( \hat{ \textbf{\textit{w}}}_m ) \right) = \frac{1}{K} \DIAG \left( \VAR ( \hat{\textbf{\textit{w}}}_m ) \right) \ .
\end{align}
Substituting in the autocorrelation matrix \eqref{eq:AutoCorr} and further integration to \eqref{eq:processCov}, allows factoring out $K$, as all operators are linear; thereby eliciting a reduced process noise covariance
\begin{align}
\hat{ \mathbf{Q} }_m(\tau) = \ ... \ = \frac{1}{K} \hat{\mathbf{Q}}_s (\tau) \ .
\end{align} 
\end{proof}
Similarly to before, Prop.~\eqref{Prop:V} demonstrates how state uncertainties and covariances can be reduced once estimates are averaged over an array of $K$ sensors.
\\
To conclude this section, the relevant analytical framework is presented in the form of a linear differential system, propagated in time by two triggering factors: deterministic (bias), and stochastic (noise). 
These in turn, were shown to induce a polynomial error growth that drifts the position error \eqref{eq:errExpected} in the order of $\mathcal{O}(\tau^3)$, and similarly the state covariance \eqref{eq:CovarFull}, up to $\mathcal{O}(\tau^7)$. \\

\section{Performance metrics and experimental setup} \label{sec:MTD} 
In this section we outline the procedures and measures used to gather empirical observations for this work, and assess the validity of our analytical analysis.

\subsection{Performance metrics}
To assess our hypothesis, the following functions are used to project the outcomes into a relevant measurable space, thus providing a clearer interpretation:

\begin{itemize}
\item Root mean square (RMS): quantifies the effective magnitude of a signal through the square root of the mean square components. For example, let ${x}\sim \mathcal{N}(\mu, \sigma^2)_{ \mu \perp w }$
\begin{equation} \label{eq:rms}
\text{RMS} = \sqrt{ \EV[x^2]} = \sqrt{\frac{1}{N}\sum_n^N {x}[n]^2 } = \sqrt{\mu^2 + \sigma^2} \ .
\end{equation}
\item Mean squared error (MSE): evaluates the performance of an estimator by calculating the average squared residuals. Being unbiased, i.e., $\EV[ \hat{\mu} ] = \mu$, the MSE of the estimators simplifies to variance error only,
\begin{equation} \label{eq:mse}
\text{MSE} = \EV[ (\hat{\mu}-\mu)^2 ]_{ \hat{\mu} \perp \mu } = \VAR( \hat{\mu} ) \ .
\end{equation}
\item Fisher information: measures the amount of information an observable random variable carries about unknown parameters of the underlying model. In the case of Gaussian distribution, as here, $\ell(x;\mu)$ denotes the log-likelihood function of sample $x$, resulting in \cite{bendat2011random}
\begin{align} \label{eq:FIM}
\mathcal{I}(\mu) = 
- \EV \left[ \frac{\partial^2}{\partial \mu^2} \ell (\boldsymbol{x};\mu) \right] = ... = \frac{N}{\sigma^2} \ .
\end{align}
In the scalar unbiased case, the Cramér–Rao lower bound (CRLB) is obtained by the following reciprocal, 
\begin{align} \label{eq:CRLB}
\VAR( \, \hat{\mu} \, ) \geq \frac{1}{ \mathcal{I}(\mu) } \ \Rightarrow \ \VAR( \, \hat{\mu} \, ) \geq \frac{\sigma^2}{N} \ ,
\end{align}
thus providing a lower bound on the variance error of $\hat{\mu}$, depending on the number of $N$ measurements. 
\end{itemize}

\begin{figure}[h]
\centering 
\includegraphics[width=.35\textwidth, clip, keepaspectratio]{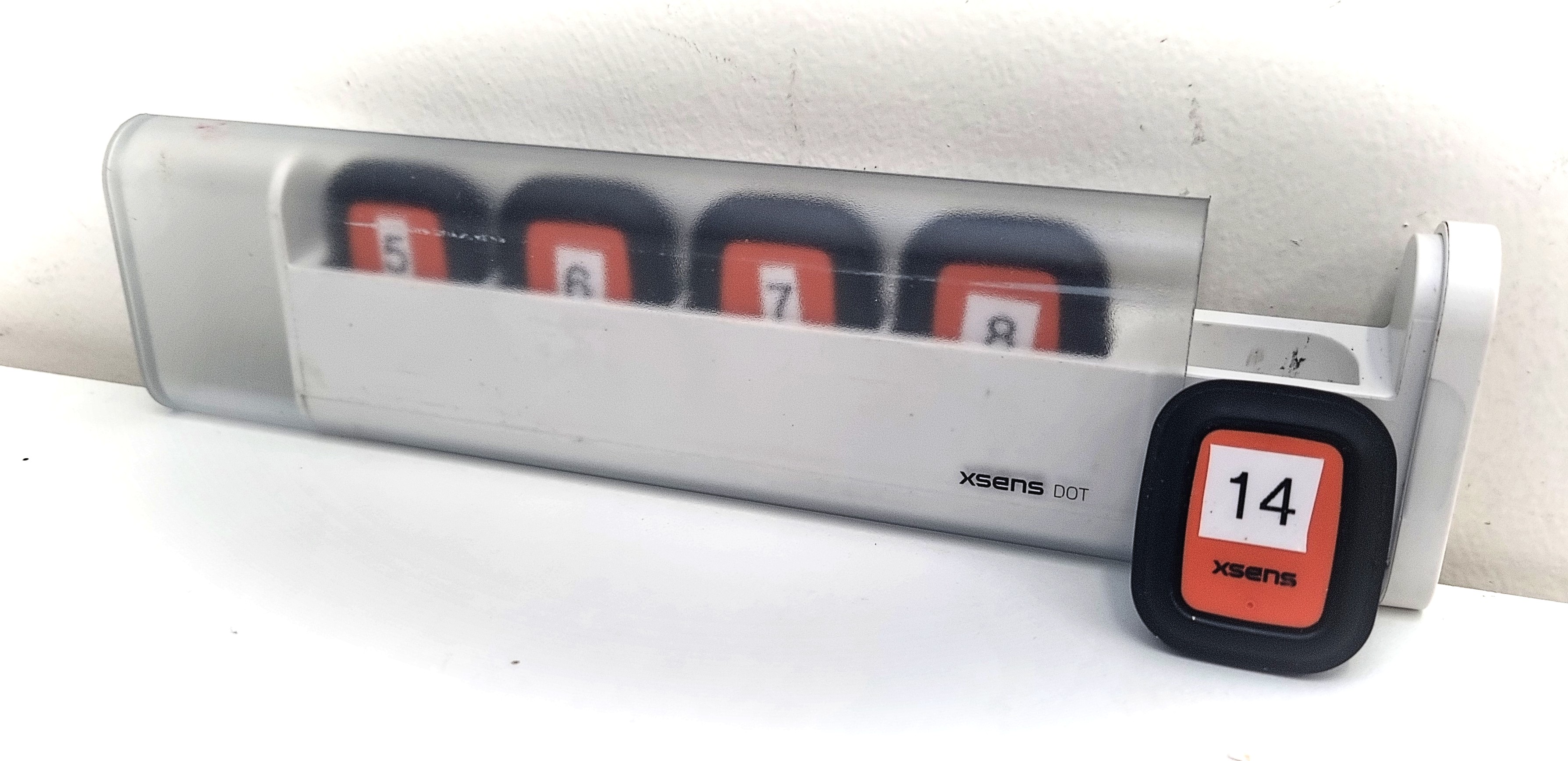}
\caption{The \href{https://www.xsens.com/hubfs/Downloads/DOT/Documents/2021-07\%20-\%20Archived\%20-\%20Xsens\%20DOT\%20User\%20Manual\%20.pdf}{Xsens-DOT}, a dedicated apparatus for alignment and synchronization of five inertial sensors \cite{XsensDot}.}
\label{fig:Sensors}
\end{figure}

\subsection{Experimental setup}
Under laboratory conditions, data from the stationary sensors was acquired in a time span of about 100 seconds, sampled at 100 Hz. Overall, recordings were obtained from a total of ten sensors, using two Xsens-Dot cases \footnote{For reproducibility, both data and code are made publicly available @ \underline{\url{https://github.com/ANSFL/Multiple-MEMS-IMUs-Estimation}}.} carrying five identical sensors each, as shown in Fig.~\ref{fig:Sensors}. 
\subsection{Instrumental errors}
The three main error factors to affect the inertial measurements stability are turn-on bias, in-run bias, and random noise. While the turn-on bias refers to the constant bias obtained when the sensor is powered on, the in-run bias describes its run-time variation. In the mid-range time frame, the Xsens in-run bias accumulated a tenth of the static bias after 100 seconds, thus is considered negligible.\\ 
Table~\ref{t:SensorErrors} summarizes the error variability between all ten sensors, using RMS between all 3D components to guarantee quantities that are orientation-independent. 
\begin{table}[h]
\centering
\caption{Error range of ten sensors @ 100 [Hz].}
\renewcommand{\arraystretch}{1.8}
\begin{tabular}{c c|c|c|c|c|}
& Error & Min & Median & Max & Units \\ \cline{2-6}
\multirow{2}{*}{\rotatebox[origin=c]{90}{\textit{Gyro.}}} & $\hat{{\textit{b}}}_{g,\text{RMS}}$ & 1.987 & 2.164 & 2.343 & \multirow{2}{*}{[deg/s]} \\ \cline{3-5}
 & $\hat{{\sigma}}_{g,\text{RMS}}$ & 0.026 & 0.033 & 0.038 & \\ \cline{2-6}
\multirow{2}{*}{\rotatebox[origin=c]{90}{\textit{Accel.}}} & $\hat{\textit{b}}_{a,\text{RMS}}$ & 0.176 & 0.181 & 0.197 & \multirow{2}{*}{[m/s$^2$]} \\ \cline{3-5}
 & $\hat{\sigma}_{a,\text{RMS}}$ & 0.007 & 0.007 & 0.009 & \\ \cline{2-6}
\end{tabular} \label{t:SensorErrors}
\end{table} 
\\
\section{Analysis and Experimental Results} \label{sec:AnR}
This section presents the experimental outcomes, and analyzes their agreement with the hypothesized model. First, the micro-level of the signal is examined, followed by a macro-level investigation of the entire INS solution.
\begin{figure}[b]
\centering 
\includegraphics[width=.5\textwidth, clip, keepaspectratio]{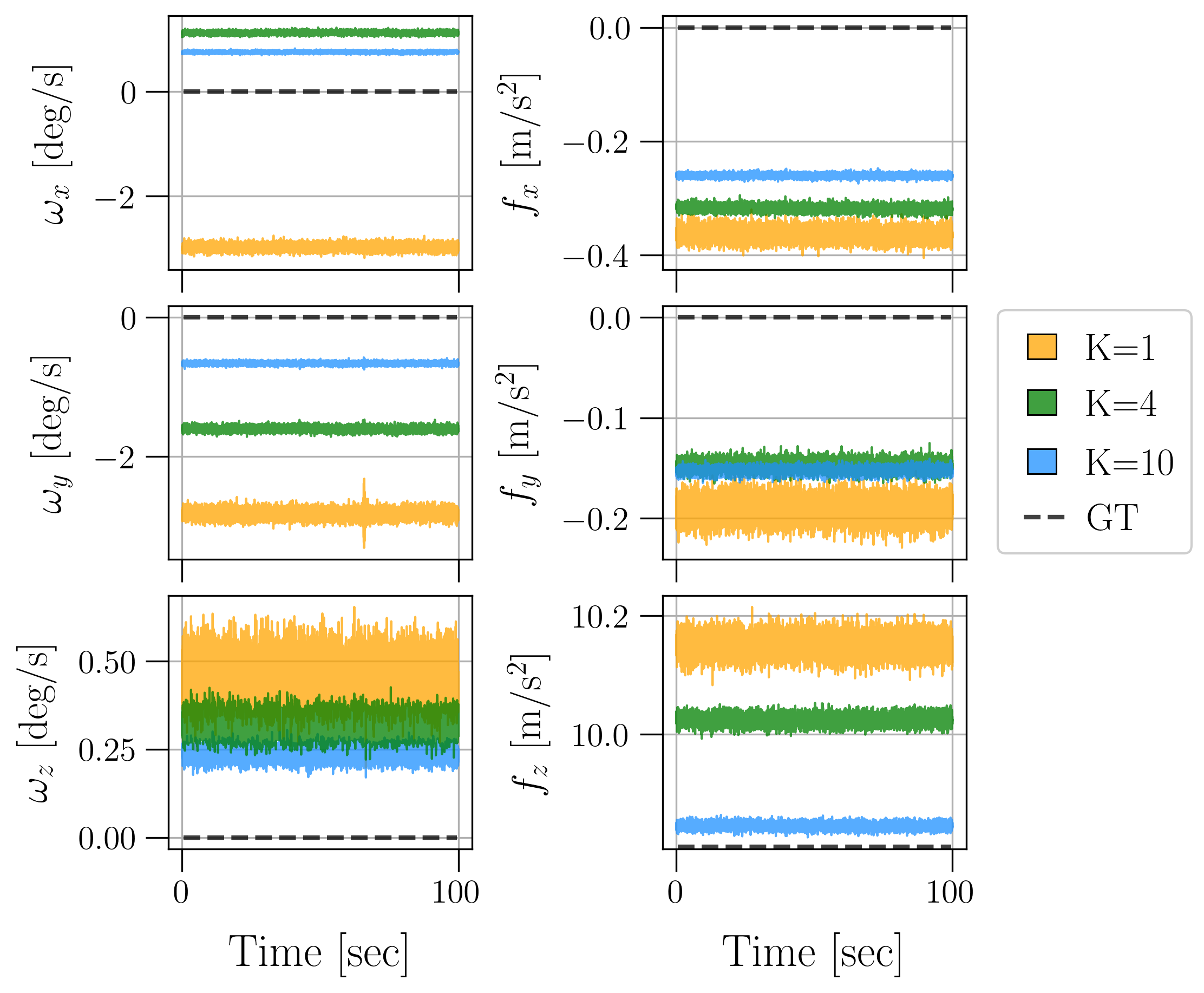}
\caption{Biased measurements of the inertial sensors;\\Left: gyroscopes. Right: accelerometers.}
\label{fig:Meas_Uncalibrated}
\end{figure}

\subsection{Parametric estimation}
The estimated characteristic errors are averaged as a function of the number of sensors, denoted by $K$. To establish a rational arrangement, sensors are sorted in descending order according to their quality, such that K=1 considers measurements from the least reliable sensor.
\\
\begin{figure}[t]
\centering 
\includegraphics[width=.5\textwidth, clip, keepaspectratio]{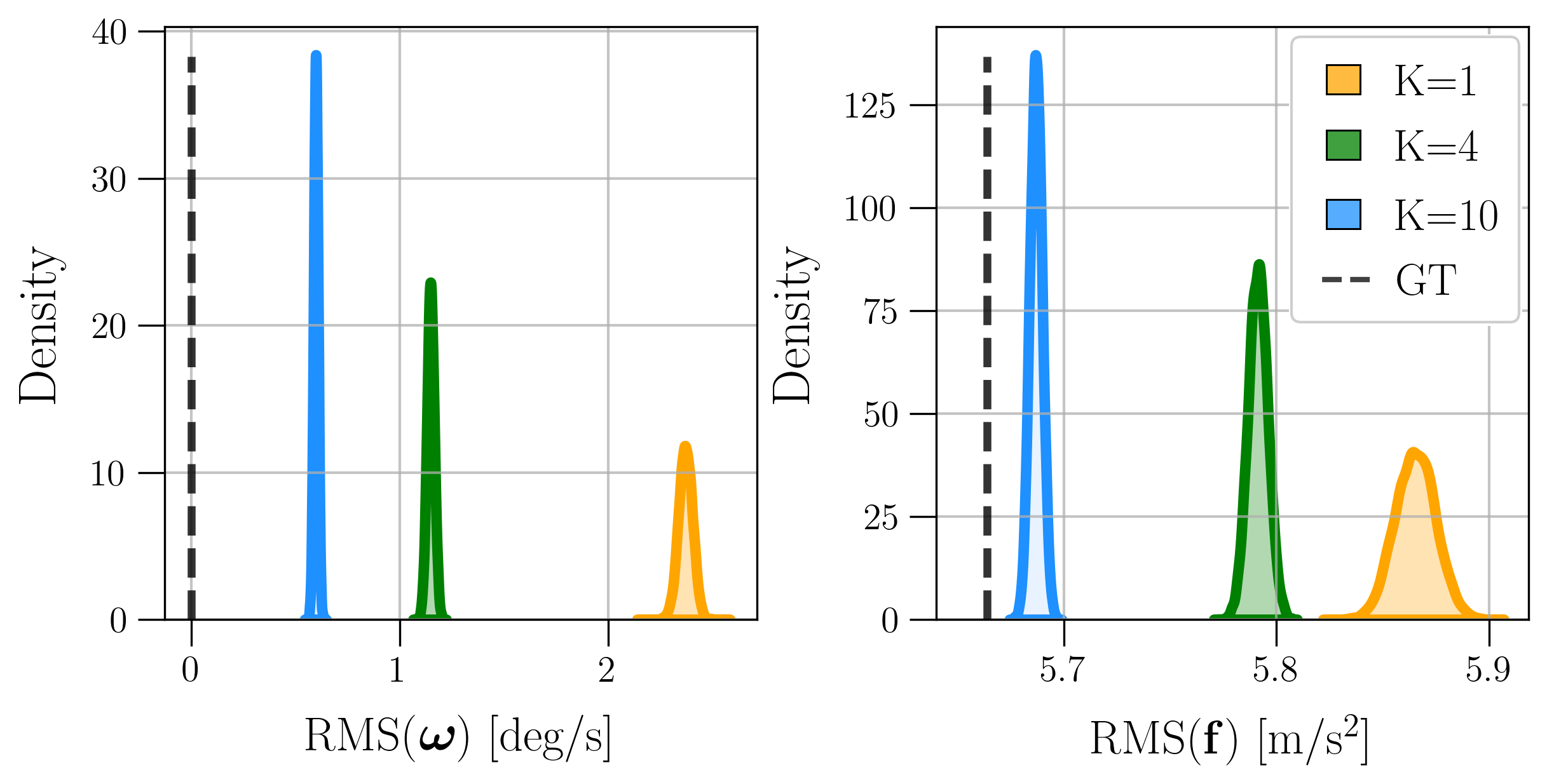}
\caption{Raw biased distributions of the inertial measurements;\\Left: gyroscopes. Right: accelerometers.}
\label{fig:PDF_Uncalibrated}
\end{figure}
Fig.~\ref{fig:Meas_Uncalibrated} illustrates the raw sensor measurements, compared over different numbers of sensors. As more sensors are added, here $K$=$\{1,4,10\}$, an improvement trend begins to emerge with respect to $K$. The thickness of each time series, denotes its noisiness, and the vertical distance from the GT dashed line expresses the bias. Furthermore, sudden outbursts, such as outliers, tend to be averaged out. Being stationary, the GT values of both sensors, are plotted in dashed lines, such that any deviation from zero is considered erroneous, except the accelerometer z-axis, whose GT satisfies $|\textbf{f}_z|=\text{g}$. 
\\
For ease of interpretation, the RMS metric is applied to the 3D signals to convey their complete magnitude. This way, the gyros RMS can simply compared with zero, as GT value. And the accelerometers GT value, is expressed by the RMS of the gravity vector by $\| \textbf{g} \| /\sqrt{3} \approx 5.664$. 
\begin{figure}[b]
\centering 
\includegraphics[width=.5\textwidth, clip, keepaspectratio]{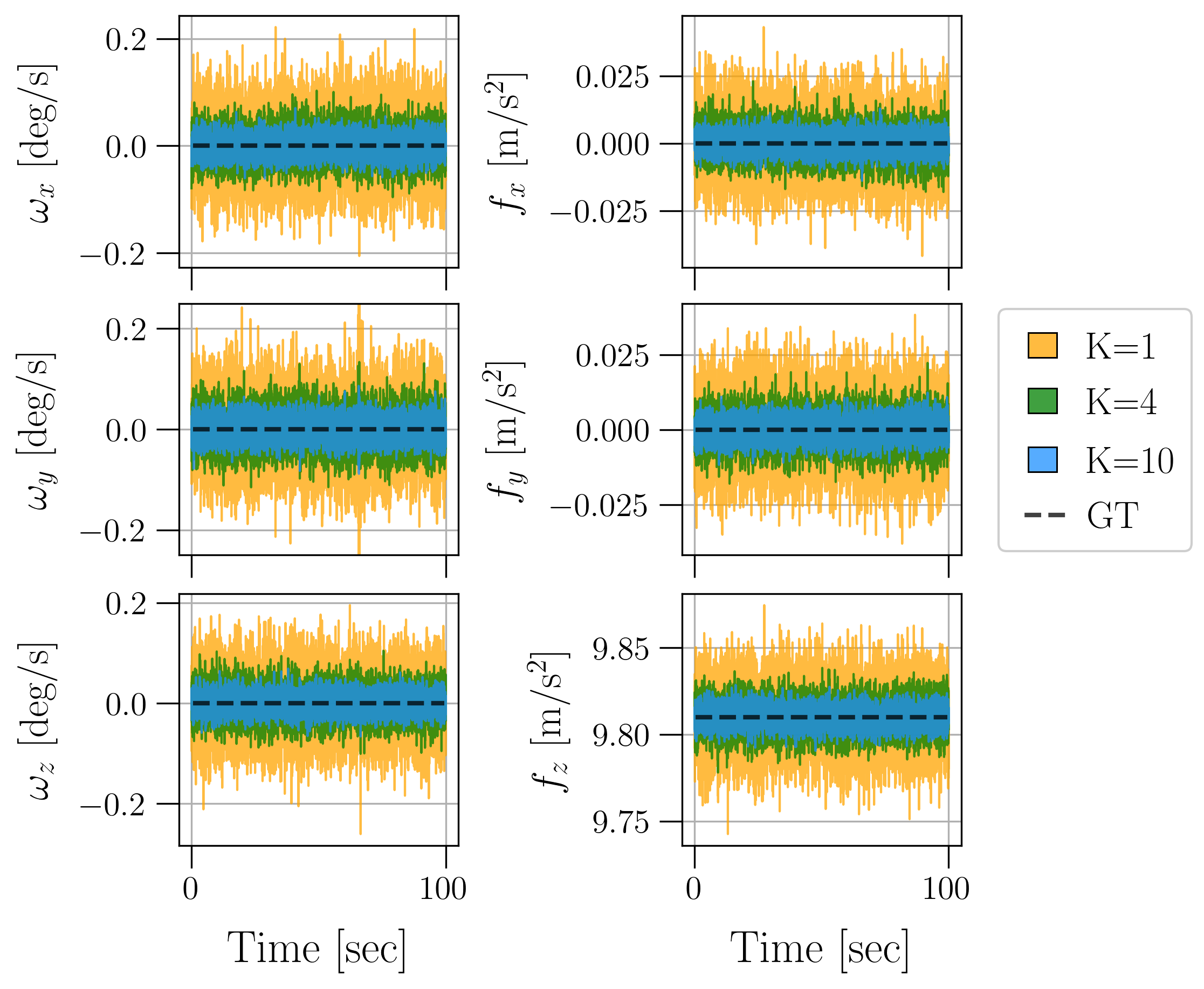}
\caption{Bias-free measurements of the inertial sensors;\\Left: gyroscopes. Right: accelerometers.}
\label{fig:Meas_Calibrated}
\end{figure}
\\
Fig.~\ref{fig:PDF_Uncalibrated} verifies the averaging effects from a distribution point of view, using the kernel density estimation (KDE) method. Unlike the probability density function (PDF), the KDE is not constrained by normalization requirements, thus can take on values greater than one.
\\
Fig.~\ref{fig:Meas_Calibrated} exploits the unbiasedness of the estimators to plot a bias-free version of the estimators, which sustains our attention on the noise mitigation. This idea is at the basis of the calibration procedure, as it aims to isolate the systematic bias from any random background noise. Once compensated, the recentered distributions are easily distinguishable by their thickness, as higher $K$'s result in higher impact. 
\begin{figure}[h]
\centering 
\includegraphics[width=.5\textwidth, clip, keepaspectratio]{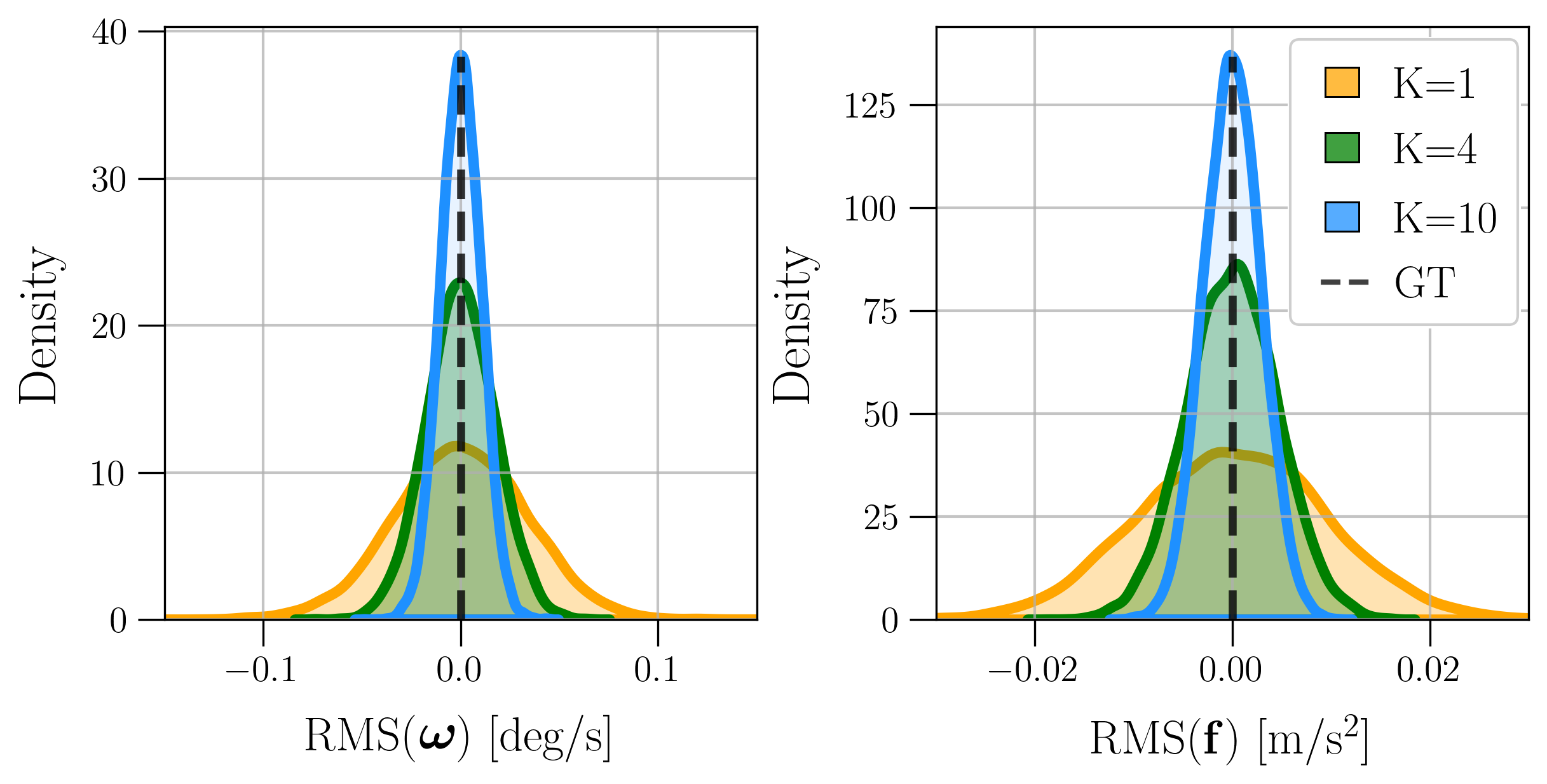}
\caption{Bias-free distributions of the inertial measurements;\\Left: gyroscopes. Right: accelerometers.}
\label{fig:PDF_Calibrated}
\end{figure}
\\
Fig.~\ref{fig:PDF_Calibrated} uses the KDE analysis once again, to visualize the reduced spread of each distribution, inversely to $\sqrt{{K}}$. The general trend becomes even clearer; the more sensors there are, the better the extracted estimates. Narrower variance results in higher precision, as noise variation is reduced. Similarly, smaller bias errors, guarantee higher accuracy, as observations fall closer to the GT. 
\\
Up to this point, the estimates have been exclusively calculated along the sensor axis. Next, the analysis is extended by averaging over the time axis, using a growing time window.
\begin{figure}[b]
\centering 
\includegraphics[width=.5\textwidth, clip, keepaspectratio]{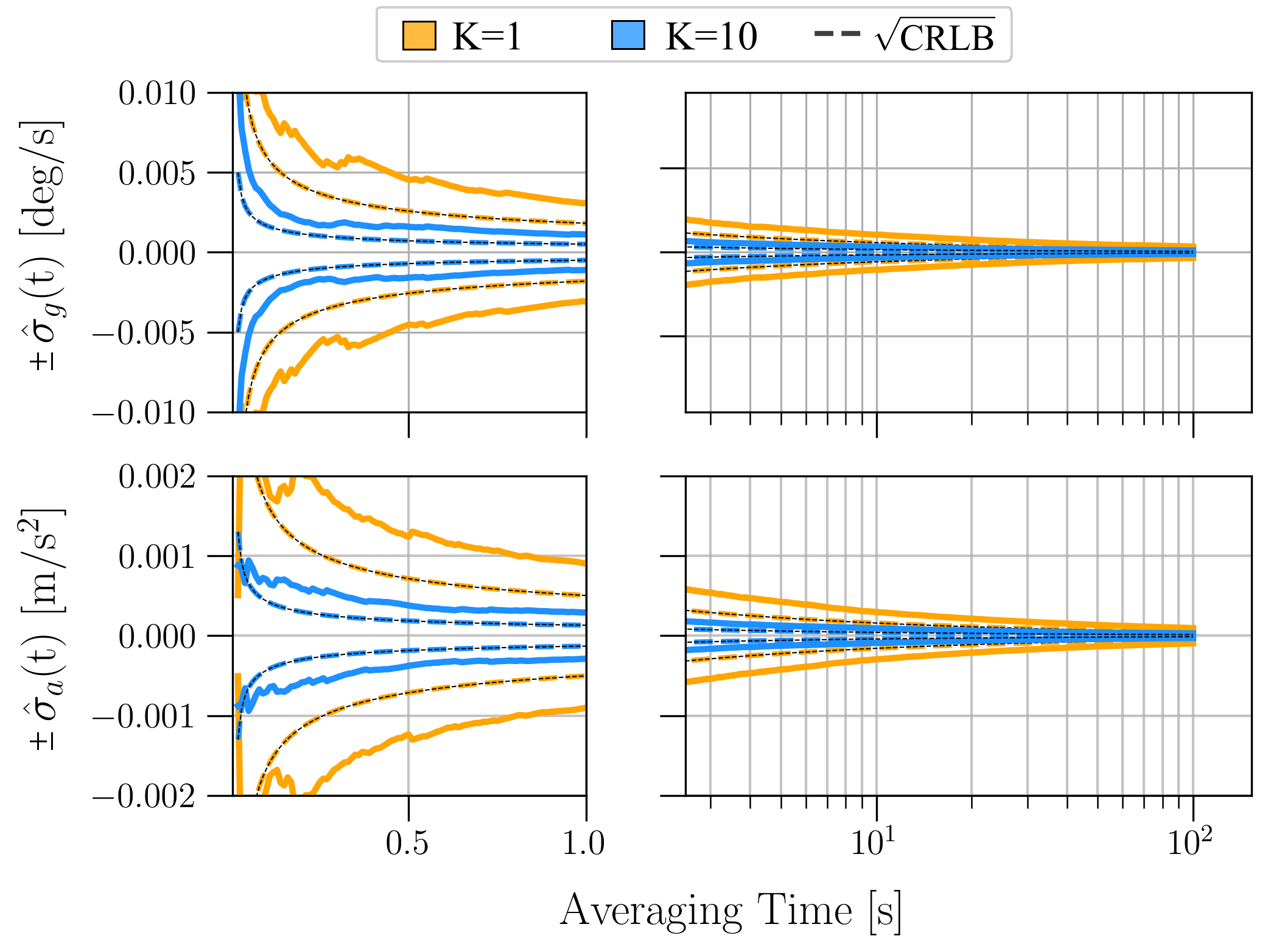}
\caption{Estimated noise density (RMS) in a semi-log plane.}
\label{fig:STD_vs_time}
\end{figure}
\\
Fig.~\ref{fig:STD_vs_time} presents the estimates of the sample standard deviation, as a function of the averaging time. To facilitate visualization, only $K=\{1,10\}$ are considered from here on. 
\\
To address the asymptotic behavior, the horizontal timeline is split into two time scales. From the left, a time resolution of one second highlights the instantaneous noise mitigation. From the right, the error is plotted on a semi-log scale to capture the exponentiality over the entire time range, implying for the marginal benefit.
Being unbiased, the estimates can be thought of as a moving standard deviation, which decreases with respect to the amount of measurements over time. These findings align well with Prop.~\ref{Prop:II}, stating that the error variance responds to both $N$ and $K$ and preserves their proportions. 
\begin{figure}[h]
\centering 
\includegraphics[width=.5\textwidth, clip, keepaspectratio]{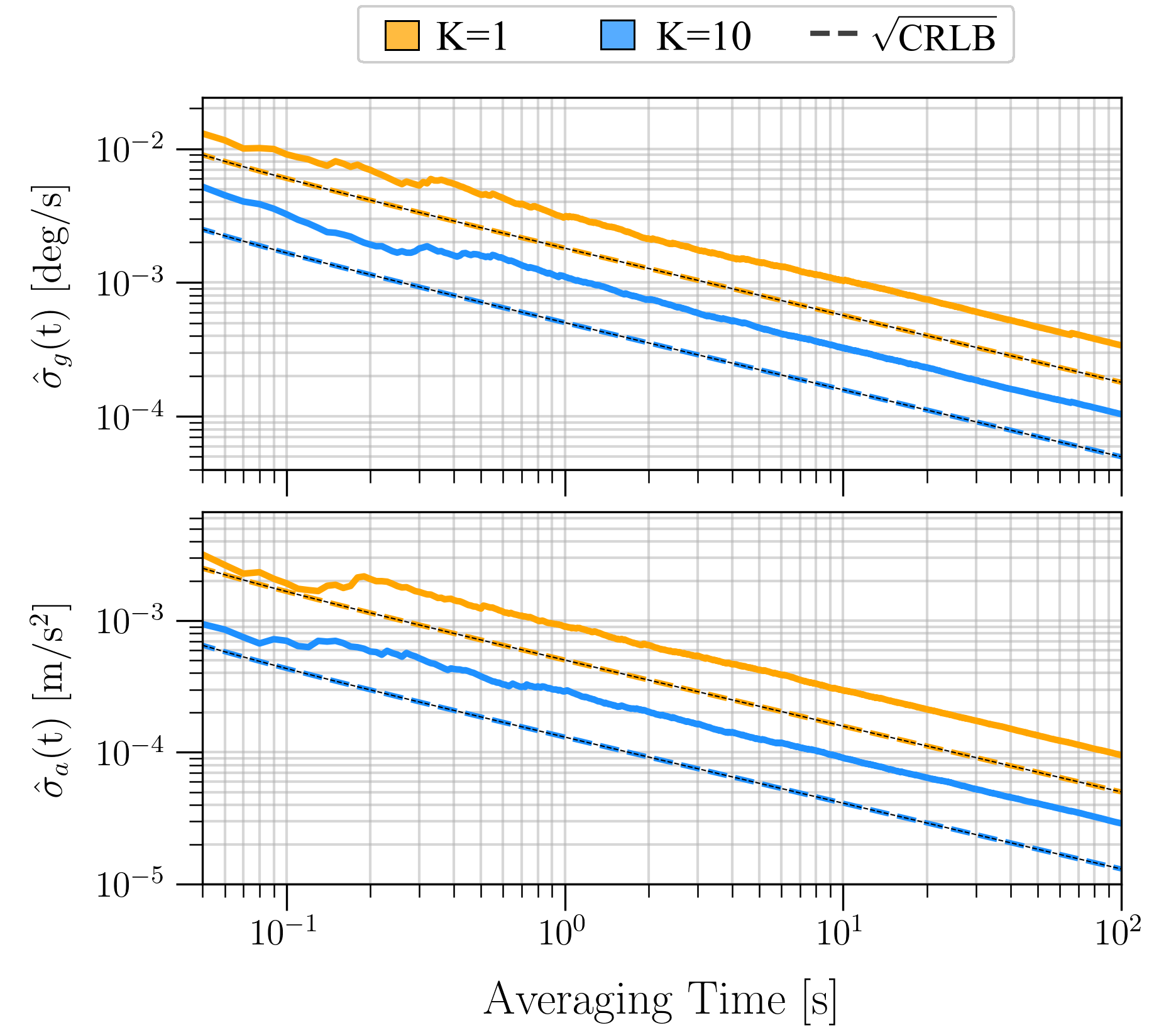}
\caption{Estimated noise density (RMS) in a log-log plane.}
\label{fig:STD_vs_time_log}
\end{figure}
\\
Fig.~\ref{fig:STD_vs_time_log} represents the estimated statistics in a logarithmic plane, such that equal distances indicate equal ratios. Using the base 10 logarithm, scaling properties can be more easily analyzed in terms of order of magnitude (OoM), i.e., a factor of 10.
\\
As shown, all slopes are equal to -0.5, thus are parallel to each other over different $K$'s. Using the square root of the CRLB \eqref{eq:CRLB}, estimates from both the individual sensor and the entire array can be compared to their minimum theoretical limits.
It is noteworthy that the unstable fluctuations observed at short timescales can be explained by the limited sample sizes, which are more sensitive to outliers. 
\\
But before reporting and analyzing the experimental results, let us recall the expected analytical behavior that relates the estimates to the independent variables, in Table~\ref{t:Params}.
\begin{table}[h]
\centering
\caption{Conceptual evaluation matrix.}
\renewcommand{\arraystretch}{2.2}
\begin{tabular}{c|c|c||c|}
\multicolumn{1}{c}{} & \multicolumn{2}{c}{\textbf{Sensor axis} $\rightarrow$} & \multicolumn{1}{c}{} \\ \cline{2-4}
\multirow{2}{*}{\rotatebox[origin=c]{90}{$\leftarrow$ \textbf{Time axis} \ }} & $\hat{\sigma}_{s, t_0}$ & $\hat{\sigma}_{m, t_0}$ & 1/$\sqrt{{K}}$ \\[1mm] \cline{2-4}
& $\hat{\sigma}_{s, t_f}$ & $\hat{\sigma}_{m, t_f}$ & 1/$\sqrt{{K}}$ \\[0mm] \cmidrule{2-4}  \morecmidrules\cmidrule{2-4} 
 & 1/$\sqrt{{N}}$ & 1/$\sqrt{{N}}$ & 1/$\sqrt{{NK}}$ \\ \cline{2-4}
\end{tabular} \label{t:Params}
\end{table}
\\
The horizontal axis distinguishes between the sample mean of a single sensor ($s$), and that of the multiple ($m$). The vertical axis denotes the time, where subscript $t_0$ refers to the initial sample at the first time step, and subscript $t_f$ denotes the accumulated sample after 100 seconds. To examine the improvement obtained in each axis, the ratios between the cells are given in the external cells, separated by $\|$. The bottom right cells denote the expected ratio between the diagonal cells.
\\
Table~\ref{t:Statistics} follows this structure, substituting the experimental values in their corresponding cells, such that their improvement ratios can be understood.
\begin{table}[h]
\captionsetup{justification=centering}
\caption{Evaluation matrix of ratios of parameters.}
\renewcommand{\arraystretch}{2.}
\begin{subtable}[c]{0.225\columnwidth}
    \begin{tabular}{|c|c||c|}
    \multicolumn{3}{c}{Gyroscopes $\sigma_g$ [deg/s]} \\ \hline
    0.0381 & 0.0121 & 0.3194 \\[1mm] \hline
    3.31\text{\sc{e}-}4 & 1.04\text{\sc{e}-}4 & 0.3131 \\[0mm] \cmidrule{1-3}  \morecmidrules\cmidrule{1-3} 
    8.71\text{\sc{e}-}3 & 8.54\text{\sc{e}-}3 & 2.72\text{\sc{e}-}3 \\ \hline
    \end{tabular}
\end{subtable}
\hspace{2.5cm}
\begin{subtable}[c]{0.225\columnwidth}
    \begin{tabular}{|c|c||c|} 
    \multicolumn{3}{c}{Accelerometers $\sigma_a$ [m/s$^2$]} \\ \hline
    0.0091 & 0.0028 & 0.313 \\[1mm] \hline
    9.60\text{\sc{e}-}5 & 3.08\text{\sc{e}-}5 & 0.321 \\[0mm] \cmidrule{1-3}  \morecmidrules\cmidrule{1-3} 
    0.0105 & 0.0108 & 3.38\text{\sc{e}-}3 \\ \hline
    \end{tabular}
\end{subtable}
\label{t:Statistics}
\end{table}
\\
Regardless of the sensor type, it can be observed that similar proportions are obtained in both tables. That is, the ratios calculated in the rightmost column and the bottom row converge into relatively close constants. Since every increase or decrease of ten decibels [dB] represents a tenfold change, consider the following interpretation:
\begin{align} \label{eq:log_10} 
10 \log_{10} (\sim 0.31) &\approx -5 \hspace{3mm} \text{[dB]} \ \Leftrightarrow \ { {K}_{10}}/{ {K}_{1} } = \frac{1}{\sqrt{{K}}} \ , \\
10 \log_{10} (\sim 9\text{\sc{e}-}3) &\approx -20 \ \text{[dB]} \ \Leftrightarrow \ { t_f/t_0 } = \frac{1}{\sqrt{{N}}} \ .
\end{align}
%
In layman's terms, given $K$ sensors, the standard deviation of the error estimate is reduced by the square root, i.e., $1/\sqrt{10}$. Applied between successive time steps, sensor averaging does not depend on time (see parallel slopes), thus can be also used in dynamic conditions, without inducing latency. 
In contrast, the total number of measurements does depend on time, growing by a factor of the sampling rate. Here, given 100-second measurements sampled at 100 Hz, the reduction ratio decreases by the total number of instances, i.e., 1/$\sqrt{10^{-4}}$. 
\\
To conclude, the asymptotic behavior of the sample means was investigated with respect to the number of measurements, obtained from either of the sensors or the amount of time. The potential benefit lies in the following trade-off: a tenfold increase in the averaging time, i.e., a decade, reduces the error by -5 [dB/dec]. Similarly, multiplying the number of sensors by ten, reduces the error by -5 [dB/arr$_{10}$]. 
%
%
\subsection{State estimation}
So far, the underlying measurement model was a priori assumed, with a fixed, yet unknown parameters.
%
In this section however, the calculated estimates are stochastic, as they are intended to reveal the hidden states only with respect to a specific point in time. Here, the observed dynamics constitute a navigation problem, whose Gaussian nature enables extrapolating both its means and uncertainties, forward in time. 
\\
Driven by the uncompensated biases, the error states indicate the discrepancy between the estimates and the true states. Their noisiness, is given by the associated error state covariances, propagated forward in proportion to their initial estimates. 
\\
In the absence of any fusion mechanism, each state of the INS solution is free to drift in time, expressing the error intensities with respect to the true states, which remain constant. As a result, the CRLB can no longer be applied, as all estimated states are biased for any $t>0$.
\\
To calculate the discrete measurements with the theory developed in Section~\ref{sec:theory}, the relevant terms are discretized to allow a piecewise approximation. Under LTI dynamics, the discrete-time state-transition matrix is given by
\begin{align}
\mathbf{\Phi}_k = \mathbf{\Phi} = e^{ \mathbf{F} \Delta t } \ , 
\end{align}
exhibiting shift invariance, such that subscript $k$ is omitted, letting the discrete error state vector by
\begin{align} \label{eq:errorsDiscrete} 
\delta {\mathbf{x}}_{k+1} &= \mathbf{\Phi} \, \delta {\mathbf{x}}_{k} \ . 
\end{align}
Fig.~\ref{f:errorStates} presents a squared 3$\times$3 image, where all 3D kinematic errors are arranged row by row: position error, velocity error, and misalignment error. Each sub-image denotes a directional component describing the resulting magnitude over time as a function of the number of sensors. 
\begin{figure}[h]
\centering 
\includegraphics[width=.5\textwidth, clip, keepaspectratio]{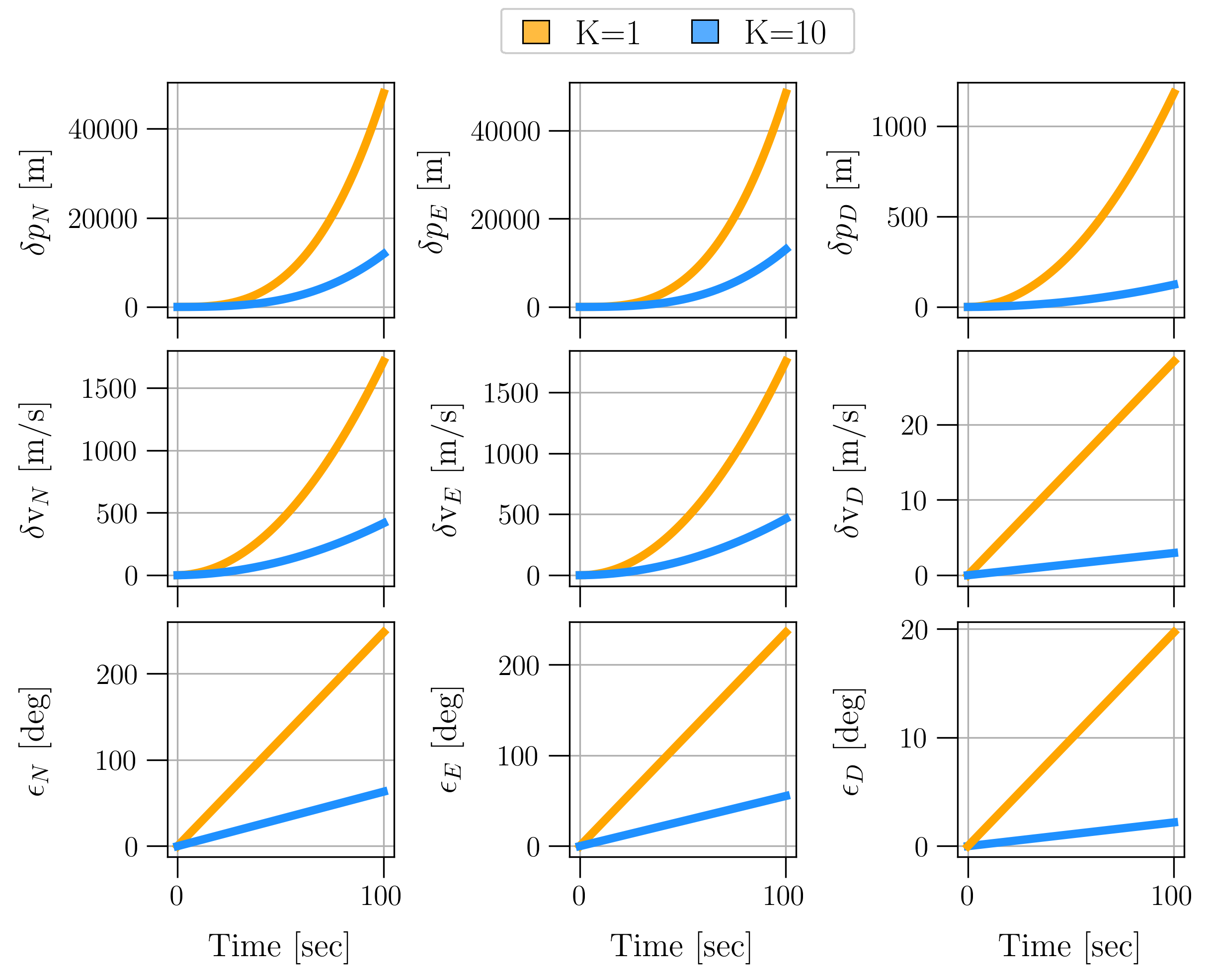}
\caption{Propagation of the error states, comparing single sensor and sensor array, in absolute values.}
\label{f:errorStates}
\end{figure}
\\
Overwhelmingly, all of the error states estimated from an array of ten sensors outperform those estimated from a single sensor. As expected, the position errors present the strongest error growth, apparently due to the consecutive integrations in time of the biases.
\\
Going down the rows, this fact is supported visually as well, as the polynomial growth becomes linear. 
While the two left columns exhibit same scale values, the rightmost column shows values that are smaller by 1 OoM. This is explained by the skew-symmetric matrix \eqref{eq:skewSym}, whose gravity projections ($\approx$10) fall only on the north-east plane.
\\
Next, the discrete-time error state covariance is given by
\begin{align}
\mathbf{P}_{k+1} &= \mathbf{\Phi} \, {\mathbf{P}_{k}} \, \mathbf{\Phi}^{\TT} + \mathbf{Q}_k \ . 
\end{align}
Being completely static, the true states are known at all times, such that the initial uncertainties are negligible, i.e., $\mathbf{P}_{0}=0$. 
\begin{figure}[t]
\centering 
\includegraphics[width=.5\textwidth, clip, keepaspectratio]{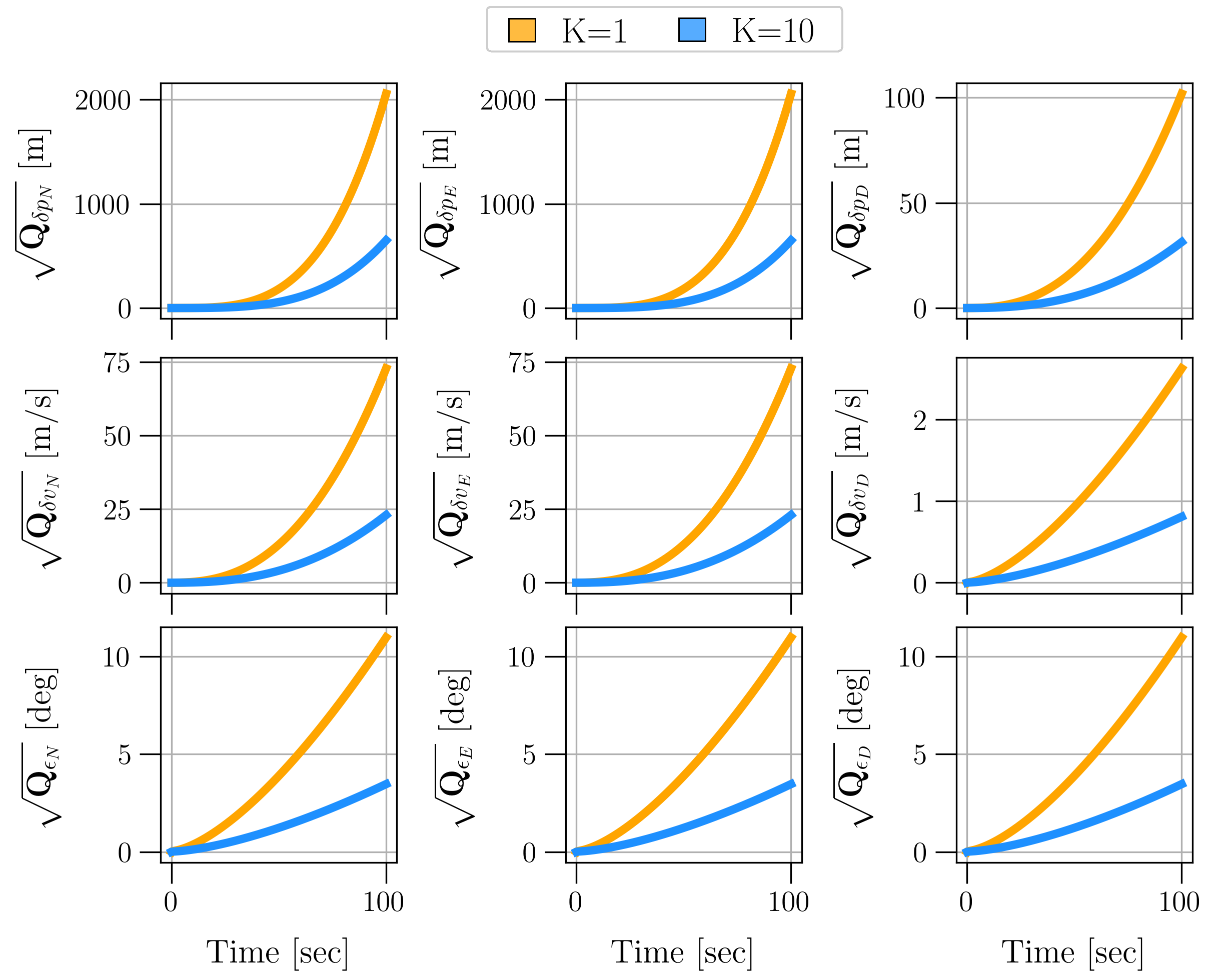}
\caption{Propagation of the state uncertainties, comparing single sensor and sensor array.}
\label{f:fig_cov}
\end{figure}
Fig.~\ref{f:fig_cov} describes the state uncertainties, calculated iteratively through the square root diagonal of discrete \eqref{eq:CovarFull}. Following similar patterns as before, the state uncertainties of the single sensor grow faster than those estimated from the entire array. 
\\
An important point to dwell on is that the state estimates and their uncertainties do not lie on the same scale, as state uncertainties are significantly smaller. 
%
The explanation for this lies in the instrumental errors of the given sensors. There, the standard deviations of the noise sources are significantly smaller than those of their counterpart biases, i.e., $\| \boldsymbol{\sigma} \| \ll \| \textbf{\textit{b}} \|$. This even worsens once they are squared to obtain their variance. As a result, the uncertainties observed converge into smaller numbers by roughly 1 OoM, i.e., $\sqrt{\hat{\mathbf{Q}}_{jj}} \approx 0.1 | \delta \mathbf{x}_j |$. 
%
\\
Table~\ref{t:Stats_comp} summarizes the numbers from both figures above, following the same 3$\times$3 structure. To answer the bottom line question of effectiveness, it measures the K$_{10}/$K$_{1}$ ratio between each state, subscript j, after 100 seconds. The left table reports the reduction ratio of each state estimate, when estimates from ten sensors are divided by estimates from a single sensor. Similarly, the right table reports how much their corresponding state uncertainties are reduced.
\begin{table}[h]
\captionsetup{justification=centering}
\caption{Evaluation matrix of ratios of estimates.}
\renewcommand{\arraystretch}{2.}
\hspace{4mm}
\begin{subtable}[c]{0.225\columnwidth}
    \begin{tabular}{|c|c|c|}
    \multicolumn{3}{c}{$\EV[ \delta \hat{\mathbf{x}}_{m,j} ] \ / \  \EV[ \delta \hat{\mathbf{x}}_{s,j} ]$} \\ \hline
    0.2473 & 0.2690 & 0.1045 \\[1mm] \hline
    0.2432 & 0.2649 & 0.1014 \\[1mm] \hline 
    0.2542 & 0.2342 & 0.1109 \\ \hline
    \end{tabular}
\end{subtable}
\hspace{2.2cm}
\begin{subtable}[c]{0.225\columnwidth}
    \begin{tabular}{|c|c|c|}
    \multicolumn{3}{c}{$\sqrt{ \hat{\mathbf{Q}}_{m, jj} \ / \  \hat{\mathbf{Q}}_{s, jj} }$} \\ \hline
    0.3157 & 0.3151 & 0.3077 \\[1mm] \hline
    0.3149 & 0.3158 & 0.3081 \\[1mm] \hline 
    0.3156 & 0.3159 & 0.3162 \\ \hline
    \end{tabular}
\end{subtable} \label{t:Stats_comp}
\end{table}
\\
The right table exhibits a high degree of agreement with the anticipated analysis in Prop.~\ref{Prop:V}, as most values converge in the vicinity of $\sqrt{ \hat{\mathbf{Q}}_{m, jj}} = \sqrt{\hat{\mathbf{Q}}_{s, jj}/10 }$. In contrast, empirical results in the left table do not fully overlap with the analysis stated in Prop.~\ref{Prop:IV}, i.e., $\EV[ \delta \hat{\mathbf{x}}_{m,j} ] = \frac{1}{10} \EV[ \delta \hat{\mathbf{x}}_{s,j} ]$. 
\\
This partial agreement occurs due to the skew-symmetric matrix \eqref{eq:skewSym}, which multiplies only the north-east components by gravity, thus obscuring the natural proportions between the sensors. However, it is important to acknowledge that such discrepancies do not necessarily indicate a modelling error, as our Propositions constitute generalized statements, independent of any specific dynamics.
\begin{figure}[t]
\centering 
\includegraphics[width=.5\textwidth, clip, keepaspectratio]{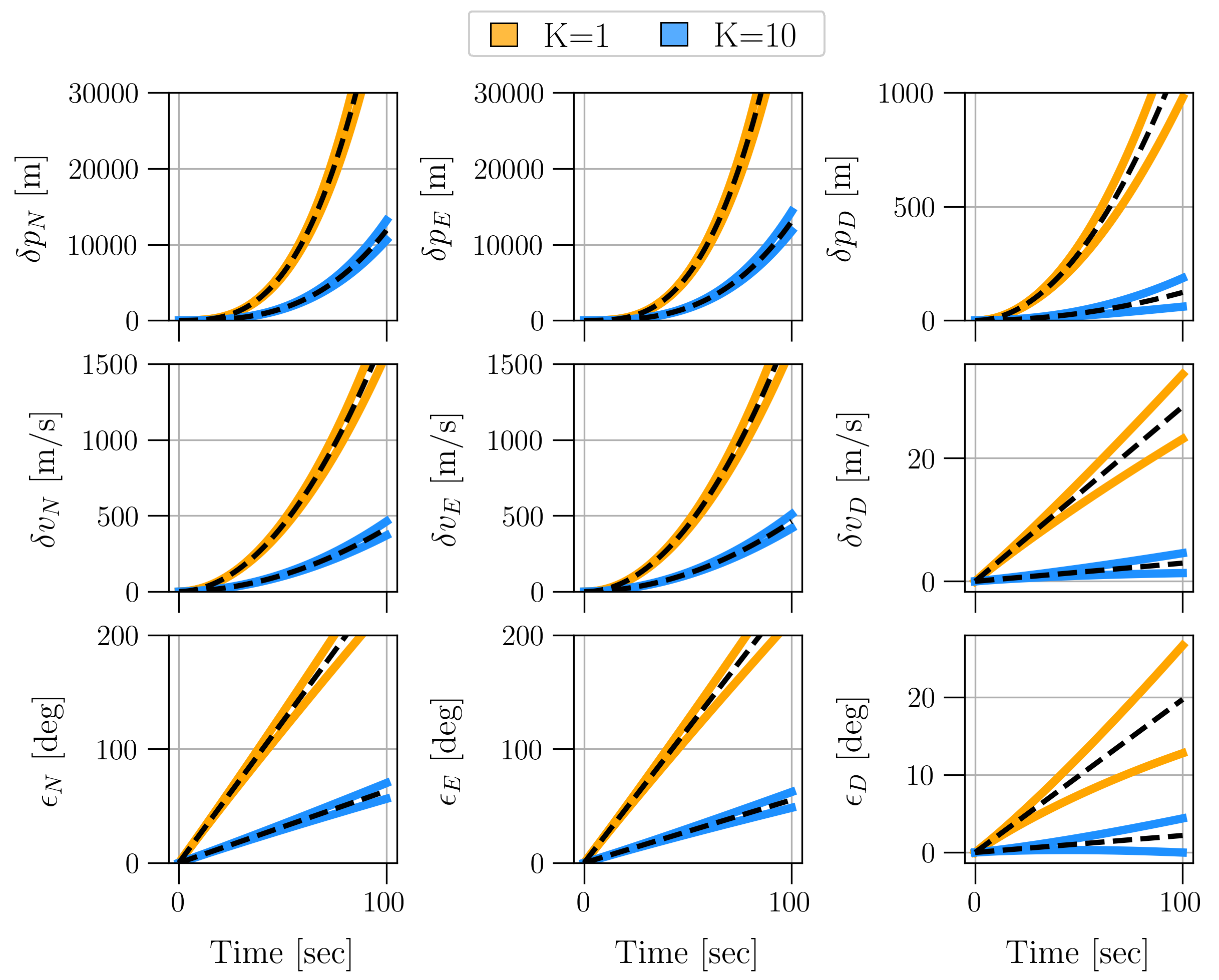}
\caption{Time-varying distribution; state estimates in black dashed lines, accompanied by their uncertainties, in colors.}
\label{f:fig_comb}
\end{figure}
\\ 
Fig.~\ref{f:fig_comb} combines both figures above, emphasizing the error model behavior as a linear Gaussian process, fully characterized by a mean vector and a state covariance matrix. To provide some intuitive interpretation, the evolution of the covariance matrix can be represented visually as a 3D ellipsoid, using its calculated eigenvalues and eigenvectors to determine the principal axes and their orientation, respectively. 
\\
\begin{figure}[h]
\centering 
\includegraphics[width=.52\textwidth, clip, keepaspectratio]{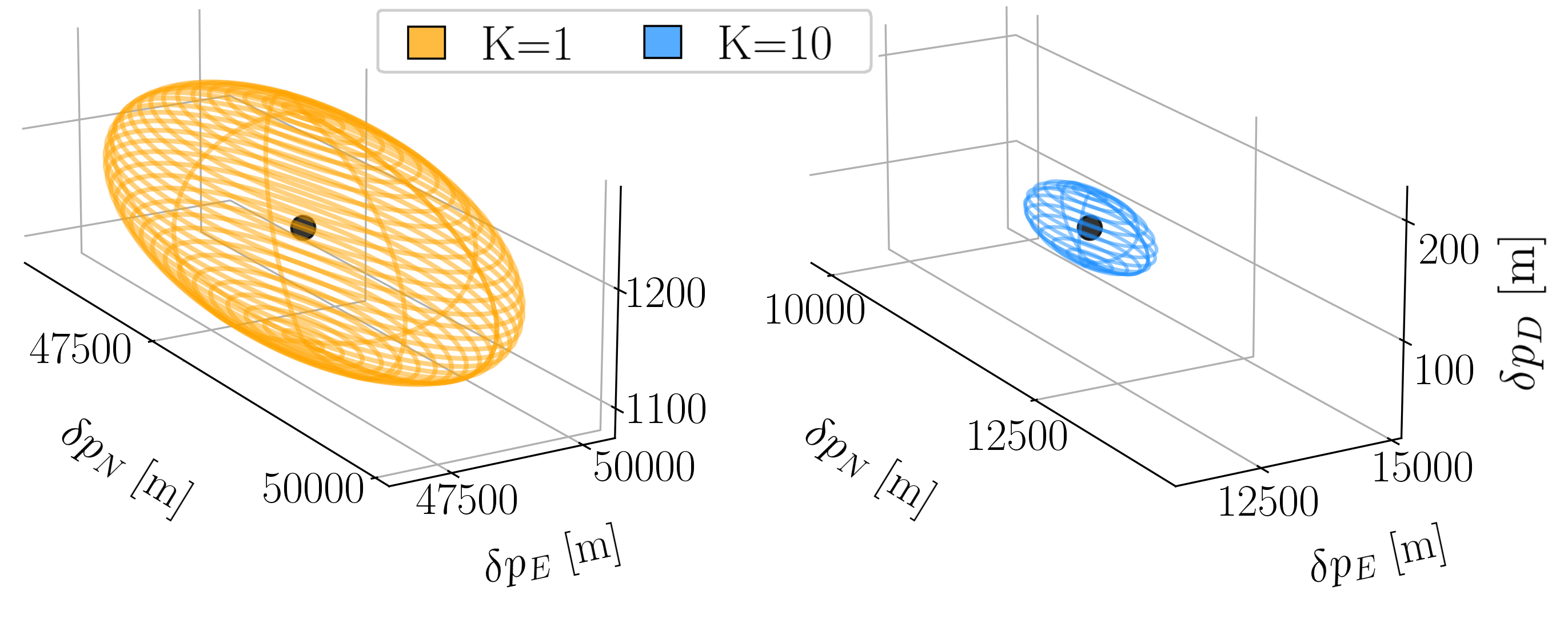}
\caption{Error ellipsoid of position error after 100 seconds.}
\label{f:fig_ellipsoid}
\end{figure}
\\
Fig.~\ref{f:fig_ellipsoid} offers a fair spatial comparison between the propagated errors within 100 seconds, as both span over equal distances in the 3D space. The centroid of each ellipsoid, $\delta\mathbf{p}^{n}$, denotes the 3D error vectors, and their outer surfaces, $\sqrt{\mathbf{Q}_{\delta\mathbf{p}^n}}$, denote the confidence region within one standard deviation.
\\
By examining the centroid values and their envelope volume, it is clearly seen that by incorporating more sensors, both the errors and their uncertainties are dramatically reduced.

\section{Discussion} \label{sec:disc}
Having thoroughly processed the results, we employ Fig.~\ref{f:accuracy} to illustrate and emphasize three of the main insights stemming from our study (in arbitrary units):
\begin{enumerate}
    \item Consistency: as additional sensors are integrated, estimates steadily converge towards the true parameter being estimated, enhancing the reliability of the results.
    \item Accuracy and precision: by averaging over multiple sensors, the estimated sample variance is reduced, suggesting an improved precision. Similarly, biases from multiple sensors cancel out each other, bringing the sample mean closer to the GT, i.e. enabling better accuracy.
    \item Reduced sensitivity: increasing the overall number of measurements reduce the exposure of the estimates to instrumental errors and outliers, thus leading to a smoother probability density function.
\end{enumerate}

\begin{figure}[h]
\centering 
\includegraphics[width=.45\textwidth, clip, keepaspectratio]{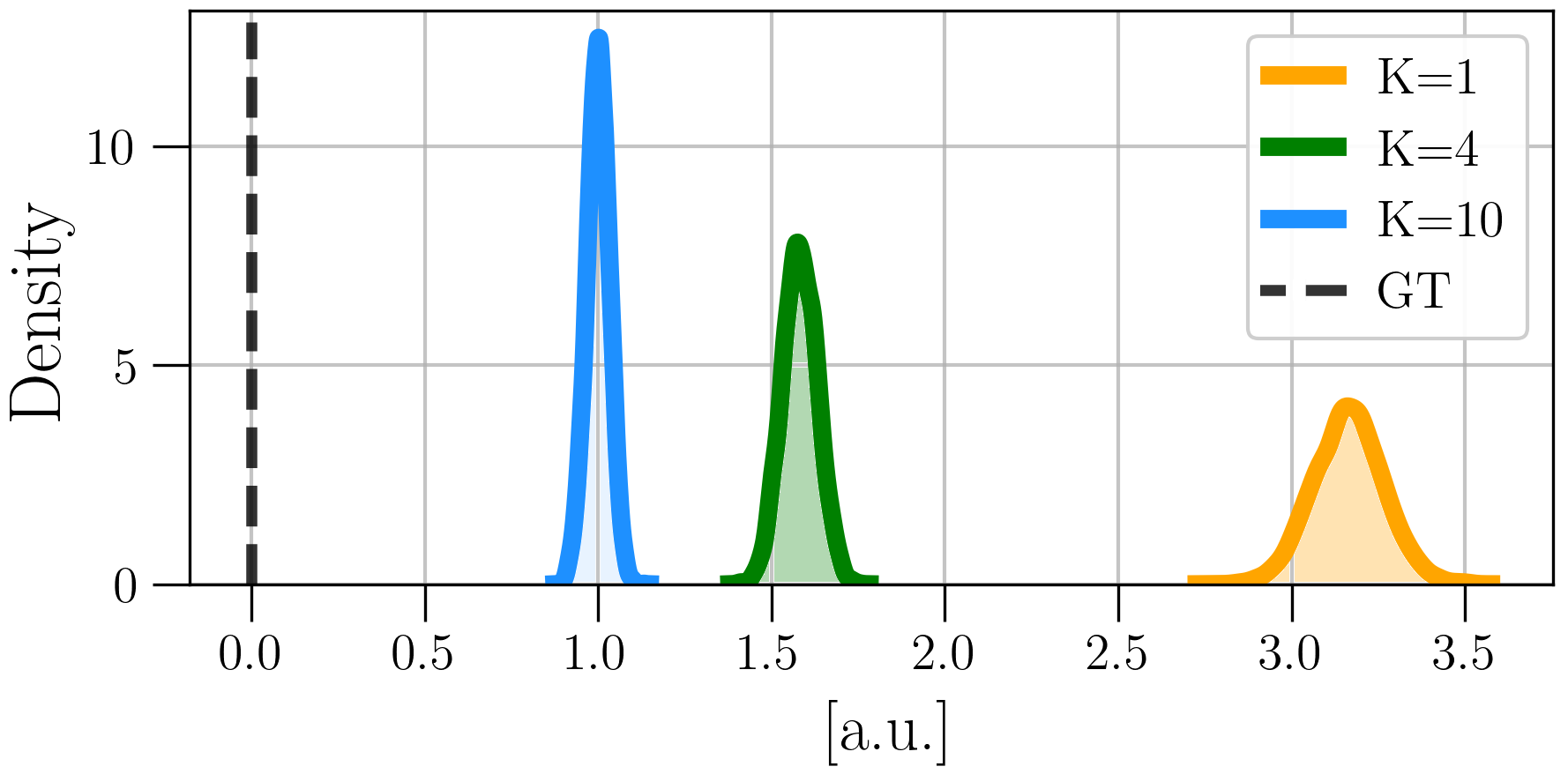}
\caption{Effects of sensor count on precision and accuracy.}
\label{f:accuracy}
\end{figure}
%
\subsection{Limitations of this study}
While the study provides valuable insights into the use of multiple sensors, several limitations should be acknowledged: 
\begin{enumerate}
    \item No variability: the sensors used in this study originated from the same production line. Their nearly identical error range facilitated not only the model analysis, but also streamlined the practical implementation.
    \item No dynamics: to sustain stationarity of the random errors, the study focused exclusively on stationary conditions across the entire recordings. This allowed remaining consistent with the closed-form analysis, without violating the zero-mean Gaussian assumption. 
    \item No SNR: A commonly used metric in similar studies, especially when referring to the CRLB metric, is the signal-to-noise-ratio (SNR). Often used as the independent variable, it is able to embody the signal strength in relation to the noise floor. However under the stationary conditions discussed, the SNR was consciously omitted due to its potential for misinterpretations with accuracy.
    \\
    For example, consider a stationary gyroscope whose expectation should embody the pure bias only. When using the SNR, higher levels of bias exhibit higher SNR, instead of reflecting the degraded accuracy. 
\end{enumerate}

\subsection{Future research}
In light of the promising outcomes of the study, while recognizing its limitations, we propose several research directions for further explorations:
\begin{enumerate}
    \item Error diversity: exploring fusibility between sensors of different error scales, both analytically, e.g., Fisher Information and CRLB, and empirically.
    \item Non-stationary conditions: extending the analytical model into dynamic scenarios where the angular velocities and the accelerations inputs are not assumed to be zero.
    \item Optimal INS integration: investigating which filtering method best fits with the multiple IMU settings, e.g., federated KF, ensemble KF, augmented KF, or the virtualization process.
\end{enumerate}


\section{Conclusion} \label{sec:conc}
This study aimed at investigating the relationship of the sample error to the sampling duration, and the number of sensors used. It was carefully designed to remain user-friendly, and simplify complex estimation topics, thus being suitable for both entry-level learners and advanced practitioners alike. 
At first, an analytical framework was developed to model the estimation error, starting from the signal level, all the way up to the entire INS solution. 
%
Upon cleansing the experimental outcomes, empirical data was closely examined and then compared with the anticipated results.
\\
In conclusion, our findings demonstrate a high level of agreement with the hypothesized model, thereby substantiating its validity and providing meaningful practical implications. Adding more sensors can indeed act as a force multiplier, as it expedites the estimation process of the true system states.
\\
Ultimately, these findings reach deep into the design of high-precision systems, which often rely on the standalone INS. 
\\
For example, during in-field deployment, non-sterile conditions are likely to contaminate the samples, thereby increasing the required averaging times. However, by adding more sensors, these operational constraints can be better met, thus guaranteeing a faster initialization at the cost of minor changes.
\appendix 
\subsection{Wide sense stationarity} \label{appendix:a}
A discrete stochastic process is said to be wide-sense stationary (WSS) if its statistical properties do not change over time. Sampled under stationary conditions, let the following realization of sensor noise 
\begin{align}
\EV[ \textit{w}[n] ] = \EV[ \textbf{\textit{w}} ] \quad & \forall \quad n \in \ \mathbb{Z} \ , \\
\operatorname{R}_{ww}(n, k) \triangleq \sum_n \textit{w}[n] \textit{w}[n-k] \quad & \forall \quad n, k \ \in \ \mathbb{Z} \ ,
\end{align}
where $\EV$ and $\operatorname{R}_{xx}$ denote the mean and the autocorrelation function (ACF), respectively, $n$ denotes the discrete time index, and $k$ is the time lag. 

\subsection{Error state covariance} \label{appendix:b}
Let the continuous process noise covariance be
\begin{align*}
\mathbf{Q}(\tau) = \EV[ \mathbf{G} \textbf{\textit{w}} \textbf{\textit{w}}^{\TT} \mathbf{G} ] = \int_{t_0}^t \mathbf{\Phi}(\tau) \mathbf{G} \, \mathbf{S}_{xx} \, \mathbf{G}^{\TT} \mathbf{\Phi}(\tau)^{\TT} d\tau \ .
\end{align*} 
Under the reasonable assumption that the sensor noise is isotropic, $\mathbf{I}_{3} \sigma^2$ denotes spherical Gaussian distribution with equal variance in all three directions, such that 
\begin{align} \label{eq:CovarFull}
& \mathbf{Q}(\tau) = \\[2mm]
& \left[ \arraycolsep=2.5pt \def\arraystretch{2.2}
\begin{array}{ccccc} 
\mathbf{Q}_{ \mathbf{p}\mathbf{p} } & \mathbf{Q}_{ \mathbf{p}\mathbf{v} } & \mathbf{Q}_{ \mathbf{p}\mathbf{\epsilon} } & \frac{\sigma_{ab}^2 \tau^3  }{6} \mathbf{I}_{3} & \frac{\sigma_{gb}^2 \tau^4  }{24} \mathbf{F}_{23} \\
\mathbf{Q}_{ \mathbf{p}\mathbf{v} }^{\TT} & \mathbf{Q}_{ \mathbf{v}\mathbf{v} } & \mathbf{Q}_{ \mathbf{v}\mathbf{\epsilon} } & \frac{\sigma_{ab}^2 \tau^2 }{2} \mathbf{I}_{3} & \frac{\sigma_{gb}^2 \tau^3  }{6} \mathbf{F}_{23} \\
\mathbf{Q}_{ \mathbf{p}\mathbf{\epsilon} }^{\TT} & \mathbf{Q}_{ \mathbf{v}\mathbf{\epsilon} }^{\TT} & \mathbf{Q}_{ \mathbf{\epsilon}\mathbf{\epsilon} } & \mathbf{0}_{3} & \frac{ \sigma^2_{gb} \tau^2}{2} \mathbf{I}_{3}  \\
\frac{\sigma_{ab}^2 \tau^3  }{6} \mathbf{I}_{3} & \frac{\sigma_{ab}^2 \tau^2 }{2} \mathbf{I}_{3} & \mathbf{0}_{3} & \sigma^2_{ab} \tau \mathbf{I}_{3}  & \mathbf{0}_{3} \\
\frac{\sigma_{gb}^2 \tau^4  }{24} \mathbf{F}_{23}^{\TT} & \frac{\sigma_{gb}^2 \tau^3  }{6} \mathbf{F}_{23}^{\TT} & \frac{ \sigma^2_{gb}\tau^2 }{2} \mathbf{I}_{3}  & \mathbf{0}_{3} & \sigma^2_{gb} \tau\mathbf{I}_{3} \\
\end{array} 
\right] \ . \notag
\end{align}
For brevity, $\mathbf{F}_{23}^{\wedge{}} = \mathbf{F}_{23} \mathbf{F}_{23}^{\TT}$ denotes the self-adjoint of the gravitation matrix. The diagonal elements, i.e., $\mathbf{Q}_{ii}$, denote the variability in time of each state, 
\begin{align*}
\mathbf{Q}_{ \mathbf{p}\mathbf{p} } &= \frac{\sigma_{gb}^2 \mathbf{F}_{23}^{\wedge{}}  }{252} \tau^7 + \left( \frac{ \mathbf{F}_{23}^{\wedge{}} \sigma^2_{g} + \mathbf{I}_{3}\sigma^2_{ab} }{20} \right) \tau^5 + \frac{ \sigma^2_{a}\mathbf{I}_{3} }{3} \tau^3 \ , \\ 
\mathbf{Q}_{ \mathbf{v}\mathbf{v} } &= \frac{\sigma_{gb}^2 \mathbf{F}_{23}^{\wedge{}}  }{20}\tau^5 +  \left( \frac{ \mathbf{F}_{23}^{\wedge{}} \sigma^2_{g} + \mathbf{I}_{3} \sigma^2_{ab} }{3}\right) \tau^3 + \frac{ \sigma^2_{a}\mathbf{I}_{3} }{2} \tau  \ , \\ 
\mathbf{Q}_{ \mathbf{\epsilon}\mathbf{\epsilon} } &= \frac{ \sigma^2_{gb}\mathbf{I}_{3}  }{3} \tau^3 + \sigma^2_{g} \mathbf{I}_{3} \tau \ . \\
\end{align*}
In contrast, the off-diagonal elements represent the degree to which different pairs of errors are intercorrelated:
\begin{align*}
\mathbf{Q}_{ \mathbf{p}\mathbf{v} } &= \frac{\sigma_{gb}^2 \mathbf{F}_{23}^{\wedge{}}  }{72}\tau^6 +  \left( \frac{ \mathbf{F}_{23}^{\wedge{}} \sigma^2_{g} + \mathbf{I}_{3} \sigma^2_{ab} }{8}\right) \tau^4 + \frac{ \sigma^2_{a}\mathbf{I}_{3} }{2} \tau^2 \ , \\ 
\mathbf{Q}_{ \mathbf{p}\mathbf{\epsilon} } &= \frac{\sigma_{gb}^2 \mathbf{F}_{23} }{30} \tau^5 + \frac{\sigma_g^2 \mathbf{F}_{23}}{6} \tau^3 \ ,\\ 
\mathbf{Q}_{ \mathbf{v}\mathbf{\epsilon} } &= \frac{\sigma_{gb}^2 \mathbf{F}_{23} }{8} \tau^4 + \frac{\sigma_g^2 \mathbf{F}_{23}}{2} \tau^2 \ . \end{align*}

\bibliographystyle{IEEEtran}
\bibliography{Ref}

\begin{thebibliography}{10}
\providecommand{\url}[1]{#1}
\csname url@samestyle\endcsname
\providecommand{\newblock}{\relax}
\providecommand{\bibinfo}[2]{#2}
\providecommand{\BIBentrySTDinterwordspacing}{\spaceskip=0pt\relax}
\providecommand{\BIBentryALTinterwordstretchfactor}{4}
\providecommand{\BIBentryALTinterwordspacing}{\spaceskip=\fontdimen2\font plus
\BIBentryALTinterwordstretchfactor\fontdimen3\font minus
  \fontdimen4\font\relax}
\providecommand{\BIBforeignlanguage}[2]{{%
\expandafter\ifx\csname l@#1\endcsname\relax
\typeout{** WARNING: IEEEtran.bst: No hyphenation pattern has been}%
\typeout{** loaded for the language `#1'. Using the pattern for}%
\typeout{** the default language instead.}%
\else
\language=\csname l@#1\endcsname
\fi
#2}}
\providecommand{\BIBdecl}{\relax}
\BIBdecl

\bibitem{woodman2007introduction}
O.~J. Woodman, ``An introduction to inertial navigation,'' University of
  Cambridge, Computer Laboratory, Tech. Rep., 2007.

\bibitem{Groves2013}
P.~D. Groves, \emph{Principles of {GNSS}, Inertial and Multisensor Integrated
  Navigation Systems}.\hskip 1em plus 0.5em minus 0.4em\relax Artech House,
  2013.

\bibitem{Titterton2004}
D.~Titterton and J.~L. Weston, \emph{Strapdown Inertial Navigation
  Technology}.\hskip 1em plus 0.5em minus 0.4em\relax American Institute of
  Aeronautics and Astronautics and the Institution of Electrical Engineers,
  2004.

\bibitem{bancroft2011data}
J.~B. Bancroft and G.~Lachapelle, ``Data fusion algorithms for multiple
  inertial measurement units,'' \emph{Sensors}, vol.~11, no.~7, pp. 6771--6798,
  2011.

\bibitem{skog2014open}
I.~Skog, J.-O. Nilsson, and P.~H{\"a}ndel, ``An open-source multi inertial
  measurement unit ({MIMU}) platform,'' in \emph{2014 International Symposium
  on Inertial Sensors and Systems (ISISS)}.\hskip 1em plus 0.5em minus
  0.4em\relax IEEE, 2014, pp. 1--4.

\bibitem{zhang2020lightweight}
M.~Zhang, X.~Xu, Y.~Chen, and M.~Li, ``A lightweight and accurate localization
  algorithm using multiple inertial measurement units,'' \emph{IEEE Robotics
  and Automation Letters}, vol.~5, no.~2, pp. 1508--1515, 2020.

\bibitem{patel2021sensor}
U.~N. Patel and I.~A. Faruque, ``Sensor fusion to improve state estimate
  accuracy using multiple inertial measurement units,'' in \emph{2021 IEEE
  International Symposium on Inertial Sensors and Systems (INERTIAL)}.\hskip
  1em plus 0.5em minus 0.4em\relax IEEE, 2021, pp. 1--4.

\bibitem{libero2022unified}
Y.~Libero and I.~Klein, ``A unified filter for fusion of multiple inertial
  measurement units,'' \emph{arXiv preprint arXiv:2208.14524}, 2022.

\bibitem{bancroft2010multiple}
J.~B. Bancroft, ``Multiple inertial measurement unit fusion for pedestrian
  navigation,'' \emph{University of Calgary}, 2010.

\bibitem{skog2014pedestrian}
I.~Skog, J.-O. Nilsson, and P.~H{\"a}ndel, ``Pedestrian tracking using an {IMU}
  array,'' in \emph{2014 IEEE International Conference on Electronics,
  Computing and Communication Technologies (CONECCT)}.\hskip 1em plus 0.5em
  minus 0.4em\relax IEEE, 2014, pp. 1--4.

\bibitem{bose2017noise}
S.~Bose, A.~K. Gupta, and P.~Handel, ``On the noise and power performance of a
  shoe-mounted multi-{IMU} inertial positioning system,'' in \emph{2017
  International Conference on Indoor Positioning and Indoor Navigation
  (IPIN)}.\hskip 1em plus 0.5em minus 0.4em\relax IEEE, 2017, pp. 1--8.

\bibitem{chen2020deep}
C.~Chen, P.~Zhao, C.~X. Lu, W.~Wang, A.~Markham, and N.~Trigoni,
  ``Deep-learning-based pedestrian inertial navigation: Methods, data set, and
  on-device inference,'' \emph{IEEE Internet of Things Journal}, vol.~7, no.~5,
  pp. 4431--4441, 2020.

\bibitem{yuan2014localization}
Q.~Yuan and I.-M. Chen, ``Localization and velocity tracking of human via 3
  {IMU} sensors,'' \emph{Sensors and Actuators A: Physical}, vol. 212, pp.
  25--33, 2014.

\bibitem{rashid2019window}
K.~M. Rashid and J.~Louis, ``Window-warping: a time series data augmentation of
  imu data for construction equipment activity identification,'' in
  \emph{{ISARC}. Proceedings of the International Symposium on Automation and
  Robotics in Construction}, vol.~36.\hskip 1em plus 0.5em minus 0.4em\relax
  IAARC Publications, 2019, pp. 651--657.

\bibitem{zhang2022online}
Y.~Zhang, B.~Gao, D.~Yang, W.~L. Woo, and H.~Wen, ``Online learning of wearable
  sensing for human activity recognition,'' \emph{IEEE Internet of Things
  Journal}, vol.~9, no.~23, pp. 24\,315--24\,327, 2022.

\bibitem{guerrier2009improving}
S.~Guerrier, ``Improving accuracy with multiple sensors: Study of redundant
  {MEMS-IMU/GPS} configurations,'' in \emph{Proceedings of the 22nd
  international technical meeting of the Satellite Division of the Institute of
  Navigation (ION GNSS 2009)}, 2009, pp. 3114--3121.

\bibitem{martin2013new}
H.~Martin, P.~Groves, M.~Newman, and R.~Faragher, ``A new approach to better
  low-cost {MEMS IMU} performance using sensor arrays,'' in \emph{Proceedings
  of the 26th International Technical Meeting of the Satellite Division of The
  Institute of Navigation (ION GNSS+ 2013)}, 2013, pp. 2125--2142.

\bibitem{nilsson2016inertial}
J.-O. Nilsson and I.~Skog, ``Inertial sensor arrays—a literature review,'' in
  \emph{2016 European Navigation Conference (ENC)}.\hskip 1em plus 0.5em minus
  0.4em\relax IEEE, 2016, pp. 1--10.

\bibitem{skog2016inertial}
I.~Skog, J.-O. Nilsson, P.~H{\"a}ndel, and A.~Nehorai, ``Inertial sensor
  arrays, maximum likelihood, and {C}ram{\'e}r--{R}ao bound,'' \emph{IEEE
  Transactions on Signal Processing}, vol.~64, no.~16, pp. 4218--4227, 2016.

\bibitem{yu2017precise}
N.~Yu, X.~Zhan, S.~Zhao, Y.~Wu, and R.~Feng, ``A precise dead reckoning
  algorithm based on bluetooth and multiple sensors,'' \emph{IEEE Internet of
  Things Journal}, vol.~5, no.~1, pp. 336--351, 2017.

\bibitem{yu2021novel}
Y.~Yu, R.~Chen, L.~Chen, X.~Zheng, D.~Wu, W.~Li, and Y.~Wu, ``A novel 3-d
  indoor localization algorithm based on ble and multiple sensors,'' \emph{IEEE
  Internet of Things Journal}, vol.~8, no.~11, pp. 9359--9372, 2021.

\bibitem{larey2020multiple}
A.~Larey, E.~Aknin, and I.~Klein, ``Multiple inertial measurement units--an
  empirical study,'' \emph{IEEE Access}, vol.~8, pp. 75\,656--75\,665, 2020.

\bibitem{carlsson2021self}
H.~Carlsson, I.~Skog, and J.~Jald{\'e}n, ``Self-calibration of inertial sensor
  arrays,'' \emph{IEEE Sensors Journal}, vol.~21, no.~6, pp. 8451--8463, 2021.

\bibitem{engelsman2022learning}
D.~Engelsman and I.~Klein, ``A learning-based approach for bias elimination in
  low-cost gyroscopes,'' in \emph{2022 IEEE International Symposium on Robotic
  and Sensors Environments (ROSE)}.\hskip 1em plus 0.5em minus 0.4em\relax
  IEEE, 2022, pp. 01--05.

\bibitem{groves2015principles}
P.~D. Groves, ``Principles of {GNSS}, inertial, and multisensor integrated
  navigation systems, [book review],'' \emph{IEEE Aerospace and Electronic
  Systems Magazine}, vol.~30, no.~2, pp. 26--27, 2015.

\bibitem{bekkeng2008calibration}
J.~K. Bekkeng, ``Calibration of a novel {MEMS} inertial reference unit,''
  \emph{IEEE Transactions on instrumentation and measurement}, vol.~58, no.~6,
  pp. 1967--1974, 2008.

\bibitem{cox1979theoretical}
D.~R. Cox and D.~V. Hinkley, \emph{Theoretical statistics}.\hskip 1em plus
  0.5em minus 0.4em\relax {CRC} Press, 1979.

\bibitem{roumeliotis1999circumventing}
S.~I. Roumeliotis, G.~S. Sukhatme, and G.~A. Bekey, ``Circumventing dynamic
  modeling: Evaluation of the error-state kalman filter applied to mobile robot
  localization,'' in \emph{Proceedings 1999 IEEE International Conference on
  Robotics and Automation (Cat. No. 99CH36288C)}, vol.~2.\hskip 1em plus 0.5em
  minus 0.4em\relax IEEE, 1999, pp. 1656--1663.

\bibitem{titterton2004strapdown}
D.~Titterton, J.~L. Weston, and J.~Weston, \emph{Strapdown inertial navigation
  technology}.\hskip 1em plus 0.5em minus 0.4em\relax IET, 2004, vol.~17.

\bibitem{jazwinski2007stochastic}
A.~H. Jazwinski, \emph{Stochastic processes and filtering theory}.\hskip 1em
  plus 0.5em minus 0.4em\relax Courier Corporation, 2007.

\bibitem{mitter2005information}
S.~K. Mitter and N.~J. Newton, ``Information and entropy flow in the
  kalman--bucy filter,'' \emph{Journal of Statistical Physics}, vol. 118, pp.
  145--176, 2005.

\bibitem{klein2010pseudo}
I.~Klein, S.~Filin, and T.~Toledo, ``Pseudo-measurements as aiding to {INS}
  during {GPS} outages,'' \emph{Navigation}, vol.~57, no.~1, pp. 25--34, 2010.

\bibitem{engelsman2023information}
D.~Engelsman and I.~Klein, ``Information aided navigation: A review,''
  \emph{arXiv preprint arXiv:2301.01114}, 2023.

\bibitem{verhaegen1986numerical}
M.~Verhaegen and P.~Van~Dooren, ``Numerical aspects of different kalman filter
  implementations,'' \emph{IEEE Transactions on Automatic Control}, vol.~31,
  no.~10, pp. 907--917, 1986.

\bibitem{bendat2011random}
J.~S. Bendat and A.~G. Piersol, \emph{Random data: analysis and measurement
  procedures}.\hskip 1em plus 0.5em minus 0.4em\relax John Wiley \& Sons, 2011.

\bibitem{XsensDot}
\BIBentryALTinterwordspacing
{Xsens Technologies}, \emph{Xsens {DOT} User Manual}, Movella, Pantheon, 7521
  Enschede, The Netherlands, 2021. [Online]. Available:
  \url{https://www.movella.com/products/wearables/movella-dot}
\BIBentrySTDinterwordspacing

\end{thebibliography}
\vspace{-1cm}

\end{document}